\providecommand{\doi}[1]{\textsc{doi}: \href{http://dx.doi.org/#1}{\nolinkurl{#1}}}
\definecolor{DarkGreen}{rgb}{0,0.3,0}
\definecolor{DarkBlue}{rgb}{0,0,0.7}
\lstdefinelanguage{SMV}[]{}{
  commentstyle=\color{DarkGreen}\itshape,
  keywordstyle=\color{blue}\bfseries,
  morekeywords=[1]{AUXVAR,VAR,INIT,LTLSPEC,TRANS,GUARANTEE,ASSUMPTION,TRUE,X,G,F,CTLSPEC,AG,EX,MODULE,VARENV,DEFINE,LTLSPECENV,next,new,boolean,WEIGHT,abs,PREV},
  morekeywords=[1]{occurs,between,and,GF,leads,to,After,have,at,most,two,until,Globally,S,responds,
  after}, morecomment=[l]{--}
}
\lstdefinelanguage{MAA}[]{}{
  morekeywords=[1]{state,initial},
  morecomment=[l]{//}
}
\newenvironment{packed_itemize}{
\begin{itemize}
  \setlength{\itemsep}{1pt}
  \setlength{\parskip}{0pt}
  \setlength{\parsep}{0pt}
}{\end{itemize}}
\newtheorem{theorem}{Theorem}
\newtheorem{lemma}{Lemma}
\newcommand{\myTitle}{Symbolic BDD and ADD Algorithms for Energy Games}
\title{\myTitle}
\author{Shahar Maoz \qquad Or Pistiner \qquad Jan Oliver Ringert
\institute{School of Computer
Science\\
Tel Aviv University, Israel}
}
\newcommand{\op}[1]{\textbf{\texttt{#1}}\xspace}
\begin{document}

\maketitle

\begin{abstract}
Energy games, which model quantitative consumption of a limited
resource, e.g., time or energy, play a central role in quantitative models for reactive
systems. Reactive synthesis
constructs a controller which satisfies a given specification, if one
exists. 
For energy games a
synthesized controller ensures to satisfy not only the safety constraints of the
specification but also the quantitative constraints expressed in the
energy game. A symbolic algorithm for energy games, recently 
presented by Chatterjee et al.~\cite{ChatterjeeRR14}, is symbolic in
its representation of quantitative values but concrete in the representation of 
game states and transitions. 
In this paper we present an
algorithm that is symbolic both in the quantitative
values and in the underlying game representation. We have
implemented our algorithm using two different symbolic
representations for reactive games, Binary Decision Diagrams (BDD) and 
Algebraic Decision Diagrams (ADD). We investigate the
commonalities and differences of the two implementations and compare their
running times on specifications of energy games.
\end{abstract}

\section{Introduction}

Reactive synthesis is an automated procedure to obtain a
correct-by-construction reactive system from its temporal logic
specification~\cite{PR89}.
Rather than manually constructing a system and using model checking to
verify its compliance with its specification, synthesis offers an
approach where a correct implementation of the system is automatically
obtained, if such an implementation exists. Traditionally,
specifications for synthesis express qualitative properties of desired
system behavior, which are either satisfied or not satisfied.
Over the last decade interest has increased for quantitative
properties~\cite{BloemCHJ09}, which can
express, e.g., cost of actions or probability of success, and allow for
synthesis of optimal solutions.

One intuitive and popular quantitative extension of games for reactive
synthesis are energy games (EGs), defined by Bouyer et al.
\cite{BouyerFLMS08}, which model quantitative consumption of a
limited resource, e.g., cost, time, or energy. Transitions between
states are annotated with weights. A play 
starts with a finite energy level that is updated by the weight on every
transition. An infinite play is winning for the system if the energy
level never goes below 0. The EG is realizable if the system
has a strategy to win for initial choices of the environment.
The objective of synthesis for EG is to construct such a
strategy.

Brim et al.~\cite{BrimCDGR11} have presented an efficient
pseudo-polynomial algorithm for solving EGs (polynomial in the
state space and the maximal weight). The algorithm is bounded by the
maximal initial energy. Because the maximal initial energy for
realizable energy games is bounded the algorithm is complete.
It computes the minimal energy required to win from any state in a
backwards manner, using a fixed point
calculation. A representation by minimal energy levels is symbolic in
the sense of antichains~\cite{DoyenR10}, also used in the algorithm of
Chatterjee et al.~\cite{ChatterjeeRR14}. These algorithms are thus
symbolic in the quantitative values but concrete in the representation
of game states and transitions\footnote{Note that the implementation
presented by Chatterjee et al.~\cite{ChatterjeeRR14} avoids a
full concretization of underlying safety games by using antichains as
described in~\cite{FiliotJR09}.}.

\textbf{In this work we present a novel algorithm that is optimized for
reactive energy games and symbolic both in the quantitative values and
in the underlying game representation.} Our algorithm implements a fixed
point computation similar to Chatterjee et al.~\cite{ChatterjeeRR14} but
due to the reactive nature of the game it updates energy levels of
system and environment states in one instead of two steps. Our symbolic
algorithm also uses antichains. However, the antichains are now defined
over sets of states and their sets of energy levels instead of over
single states and their sets of energy levels. These modifications
require efficient symbolic data structures and corresponding operations
to represent the (intermediate) results of the fixed point computation.

We further present two different implementations of our algorithm. The first
implementation is based on Binary Decision Diagrams
(BDDs)~\cite{Bryant86} to represent states, transitions, and weight
definitions. Antichains are then maps mapping minimal energy levels to
BDDs. Accordingly, the fixed point calculation of the algorithm is based
on iterations over minimal energy levels and weights and is executed using
symbolic BDD operations.

Our second implementation is based on Algebraic Decision Diagrams
(ADDs)~\cite{BaharFGHMPS97}, a special case of Multi-Terminal Binary
Decision Diagrams (MTBDDs)~\cite{FujitaMY97}. ADDs have numbers as
terminal nodes --- instead of \texttt{TRUE} and \texttt{FALSE} in the
case of BDDs --- and provide efficient implementations of symbolic
algebraic operations. With ADDs, an antichain in our algorithm is
expressed in a single ADD. Again, we have implemented the fixed point
calculation of the algorithm using only symbolic ADD operations.

The ADD and BDD algorithms do not only use symbolic data structures
for efficient representation:
\textbf{contributions of our implementations are also their specific use
of symbolic manipulations for efficiently performing quantitative
operations.} We explain both implementations to highlight their commonalities and
differences. In Sect.~\ref{sec:evaluation} we compare the running times
of the BDD and ADD implementations against each other and against
themselves over increasing sizes of specifications and different
specifications of weights.

We present background on games, BDDs, ADDs, and EGs in
Sect.~\ref{sec:preliminaries}. Sect.~\ref{sec:example} introduces a
running example specification. We present our generic algorithm in
Sect.~\ref{sec:algorithm},  and its BDD and ADD implementations in
Sect.~\ref{sec:algorithmBDD} and Sect.~\ref{sec:algorithmADD} resp. 
Sect.~\ref{sec:evaluation} presents a preliminary
evaluation of our two implementations. We discuss the results and
related work in Sect.~\ref{sec:discussion} and conclude in
Sect.~\ref{sec:conclusion}.

\section{Preliminaries}
\label{sec:preliminaries}

\subsection{Infinite Games, BDDs, ADDs}

We repeat some basic definitions of games, BDDs, and
ADDs. We also describe the general approach for
symbolically representing games between an environment and a system
player using BDDs.

\paragraph{Games, game graphs, plays, and strategies}
We consider infinite games played between two players on a finite
directed graph as they move along its edges. For a game we define a \textit{game
graph} as a tuple $\Gamma = \langle G=\langle V,E,w\rangle, V_{0}, V_{1}\rangle$,
where $G=\langle V,E,w \rangle$ is a finite directed
graph with a weight function $w:
E \rightarrow \mathbb{Z}$ that attaches weights to its edges. $V_{0},
V_{1}$ is a partition of $V$ into $V_{0}$, the set of player-0 (the maximizer)
vertices, and $V_{1}$, the set of player-1 (the minimizer) vertices. Each
vertex $v \in V$ has out degree at least one, i.e., $G$ has no
deadlocks.
A play starts by placing a pebble on a given initial vertex, and
continues infinitely many rounds as the two players move the pebble
along the edges of $G$.
In each round, if the pebble is on a vertex $v\in{V_i},~i\in\{0, 1\}$,
then player-$i$ chooses an outgoing edge from $v$ to some adjacent
vertex $u\in{V}$, and the next round starts with the pebble on $u$. The
infinite path formed by the rounds is called a \textit{play}.
A \textit{strategy} of player-$i$ is a function that given the prefix of
a play ending in a vertex of player-$i$ returns a successor vertex.
\paragraph{Binary Decision Diagrams and Algebraic Decision Diagrams}
Binary decision diagrams (BDDs)~\cite{Bryant86} are a compact data
structure for representing and manipulating Boolean functions. For the
purpose of verification and synthesis they are usually used to represent
sets of Boolean vectors~\cite{HuthR04}, e.g., encoding sets of states or
transitions.
Bryant~\cite{Bryant86} showed how Boolean operations on BDDs can be
efficiently implemented, including logical connectives and existential
and universal abstractions. ADDs~\cite{BaharFGHMPS97} are a
generalization of BDDs, such that the terminal nodes may take on values
belonging to a set of constants different from 0 and 1, such as integers
or reals. Another name for ADDs is Multi-Terminal Binary Decision
Diagrams (MTBDDs)~\cite{FujitaMY97}, which reflects their structure
rather than their application for computations in algebras. An ADD
represents a Boolean function of $n$ variables
$f:\{0,1\}^{n}\rightarrow{S}$ where $S$ is a set of constants. Boolean,
arithmetic, and abstraction operations are all applicable to ADDs. We
now briefly state the most important operations that we use in our ADD
based algorithm. For ADDs $g$, $h$, and a $0$-$1$ ADD $f$, arithmetic
operators $\texttt{op} \in \{+, -, \cdot, max, \ldots\}$, denoted
$g~\texttt{op}~h$, operate on the terminal nodes for common variable
assignments, the \textit{If-then-else} Boolean operation is defined as
$\texttt{ITE}(f,g,h) = f\cdot{g}+\neg f\cdot{h}~$, and abstraction of
variables $v$ from $g$, which aggregate terminal values of $g$ for all
assignments to variables in $v$ by operators $min$ or $max$.
\paragraph{Symbolic representation of game graphs for reactive games}
Reactive games are turn-based two players games, between an environment
and a system player, in which the environment always plays first and the
system \textit{reacts}. The environment controls the input variables,
denoted by $env$, and the system controls the output variables, denoted
by $sys$, all are assumed to be Boolean, such that their values are
modified in each step of the game~\cite{PR89}. We denote by
$\mathscr{V}_{var}:=\{0,1\}^{var}$ all the possible assignments to the
variables $var$, where $s_{var}\in{\mathscr{V}_{var}}$ is some
assignment to $var$. We now define the symbolic weighted game graph
$G^{sym} := \langle \theta^e,\theta^s,\rho^e,\rho^s, w\rangle$ for
reactive games.
A \textit{state} $s$ in $G^{sym}$ is an assignment to \textit{all}
variables, i.e., $s := s_{env\cup{sys}}\in{\mathscr{V}_{env\cup{sys}}}$.
That means each state belongs to both players (unlike the general
model).
We denote by $\theta^e, \theta^s$ the sets of \textit{initial} states of
the environment and the system, respectively. Each is represented by a
BDD (or by a 0-1 ADD) that encodes its characteristic function:
$\chi_{\theta^e}:\mathscr{V}_{env}\rightarrow\nolinebreak\{0,1\}$ and
$\chi_{\theta^s}:\mathscr{V}_{env\cup{sys}}\rightarrow\{0,1\}$.
Each player has a \textit{transition relation} that defines its valid
transitions in $G^{sym}$. It is denoted by $\rho^e$ and $\rho^s$ for the
environment and the system, respectively. A next state in $G^{sym}$, as
opposed to a current state, is represented by a \textit{primed} version
of the variables $sys'$ and $env'$. Therefore, the
BDDs (or 0-1 ADDs) that represent $\rho^e$ and $\rho^s$ encode the
characteristic functions
$\chi_{\rho^e}:\nolinebreak\mathscr{V}_{env\cup{sys}\cup{env'}}\rightarrow\nolinebreak\{0,1\}~\text{and}~
\chi_{\rho^s}:\nolinebreak\mathscr{V}_{env\cup{sys}\cup{env'}\cup{sys'}}\rightarrow\{0,1\}$.
The bijection $prime:
\mathscr{V}_{sys \cup env} \rightarrow \mathscr{V}_{sys' \cup env'}$
replaces unprimed variables by their primed counterparts.
We denote a transition from state $s_1$ to state $s_2$ by
$t_{s_{1},s_{2}} := s_1 \cup prime(s_2)
\in{\mathscr{V}_{env\cup{sys}\cup{env'}\cup{sys'}}}$.
It is valid if $t_{s_{1},s_{2}}\in{\rho^e\cap{\rho^s}}$, thus consists
of a valid transition for environment choice $s^{e}_{2} :=
s_{env}\in{\mathscr{V}_{env}}$, denoted by $t_{s_{1}, s^{e}_{2}} :=
s_1\cup{prime(s^{e}_{2})}\in{\rho^e}$, followed by a valid transition
for system choice $s^{s}_{2} := s_{sys}\in{\mathscr{V}_{sys}}$, denoted by $t_{s_{1},s^{e}_{2},s^{s}_{2}} :=
s_1\cup{prime(s^{e}_{2})}\cup{prime(s^{s}_{2})}\in{\rho^s}$.
Also, $G^{sym}$ has a \textit{weight function}
$w:\mathscr{V}_{env\cup{sys}}\times{\mathscr{V}_{env\cup{sys}}}\rightarrow\mathbb{Z}\cup{\{\perp\}}$
that attaches weights to its transitions, such that for all pairs of
states $s_1, s_2$ if $t_{s_{1},s_{2}}\in{\rho^e\cap{\rho^s}}$, $w(s_1,
s_2)\in{\mathbb{Z}}$, and otherwise, $w(s_1, s_2) = \perp$. For details
of its actual representation see Sect.~\ref{sec:algorithmBDD} and
Sect.~\ref{sec:algorithmADD}.

\paragraph{Notation}
We work with abstractions where a BDD represents a set
of states or transitions, i.e., $BDD \equiv
\text{Set of States}$ or $BDD \equiv \text{Set of Transitions}$.  An ADD
represents a function assigning an integer, plus or minus infinity, to every state or
transition, i.e., $ADD \equiv \text{States} \rightarrow
\mathbb{Z}_{\pm\infty}$ or $ADD \equiv \text{Transitions} \rightarrow
\mathbb{Z}_{\pm\infty}$.

\subsection{Energy Games}

\textit{Energy games} (EG), also called the \textit{lower bound problem}, have
been studied by Bouyer et al.\cite{BouyerFLMS08}. We give the formal
definitions of EG as defined in~\cite{BouyerFLMS08, BrimCDGR11}, followed by a
fixed point formulation of the solution for EG as in~\cite{BouyerFLMS08,
ChatterjeeRR14}.
\paragraph{EG definition and objectives} We add to an infinite
play on an EG graph ${\Gamma}$ an \textit{initial credit} or
\textit{initial energy} value $c\in{\mathbb{N}}$. We define the
\textit{energy level} of the prefix $v_{0}v_{1}\ldots v_{j}$ of the play
$p=v_{0}v_{1}\ldots v_{j}\ldots$, by $\textsf{EL}(p,j) = c +
\sum_{i=0}^{j-1} w(v_{i},v_{i+1})$. The objective of player-0 is to
construct an infinite play $p=v_{0}v_{1}\ldots v_{j}\ldots$ such that
the energy level is always non-negative during the play, i.e.,
$\textsf{EL}(p,j)\geq{0}~for~all~j\geq{1}$. We say that such a play $p$
is \textit{winning} for player-0. Otherwise, if the energy level goes
below zero during $p$, it is winning for player-1. Given an initial
credit $c$, a strategy is \textit{winning} for player-0 from $v\in{V}$
for $c$ if all plays resulting from the strategy are winning for
player-0. A vertex $v$ is winning for player-$i$ if there exists a
winning strategy for player-$i$ from $v$ for some initial credit $c$.


We consider two EG problems: (1) \textit{Decision problem}: Given a
vertex $v$, the problem asks if there exists an initial amount of energy
that suffices for player-0 to win from $v$; and (2) \textit{Minimum
credit problem}: which asks for every vertex $v\in{V}$, what is the
\textit{minimal} initial amount of energy that suffices for player-0 to
win from $v$.
A solution of the second also provides a solution for the first by
checking whether the minimal initial energy is finite. In addition a
solution for the second allows for the construction of optimal
memoryless strategies for solving EGs~\cite{ChatterjeeRR14}. We thus
focus on the minimum credit problem.

\paragraph{Algorithm for solving energy games from~\cite{ChatterjeeRR14}}
Both Chatterjee et al.~\cite{ChatterjeeRR14} and Bouyer et
al.~\cite{BouyerFLMS08} present algorithms to solve the minimal credit
problem of EGs. Their solutions are presented as the greatest fixed point of
a monotone operator on a power set lattice. Given an EG graph $\Gamma =
\langle G=\langle States,E, w\rangle, States_{0}, States_{1}\rangle$ and a bound $c\in{\mathbb{N}}$, we denote by $SE(c)$
the pairs of states and energy values up to $c$, i.e.,
${States}\times{\{n\in\mathbb{N}\mid n\leq{c}\}}$. We relate to the
monotone bounded operator
$\textsf{Cpre}_{c}:\mathbb{P}{(SE(c))}\rightarrow{\mathbb{P}(SE(c))}$,
where $\mathbb{P}(SE(c))$ is the power set of $SE(c)$, that is defined
in~\cite[Eqn.
1]{ChatterjeeRR14}. Chatterjee et al.~\cite{ChatterjeeRR14} present a
symbolic fixed point algorithm that computes a set of pairs consisting
of a state and an energy level that suffices for player-0 to win
for that state. It performs iterative applications of
$\textsf{Cpre}_{c}$ to $SE(c)$, which results in a finite
$\subseteq$-descending chain whose last element approximates the
greatest fixed point of $\textsf{Cpre}_{c}$.

\paragraph{Symbolic representation of upward closed sets} We define a
\textit{partial order} $\preceq~\subseteq{SE(c)\times{SE(c)}}$ (i.e.,
reflexive, transitive, antisymmetric binary relation), such that for all
$(s,e), (s',e') \in{SE(c)},~{(s,e)\preceq{(s',e')}}$ iff
${s=s'\wedge{e\leq{e'}}}$.
We call $\textsf{Up}(\preceq,
A):=\{(s,e)\in{SE(c)}\mid\exists{(s',e')\in{A}}:(s',e')\preceq{(s,e)}\}$
the $\preceq\textit{-upward }\newline\textit{closure}$ of $A\subseteq{SE(c)}$.
A set $A\subseteq{SE(c)}$ is
$\preceq$-$\textit{upward-closed}$ if for all $(s,e),(s',e')
\in{SE(c)}$, if ${(s,e)\in{A}}$ and $(s,e)\preceq{(s',e')}$, then
$(s',e')\in{A}$. An $\preceq$-upward-closed set equals its upward
closure. We denote by $\textsf{Min}(\preceq,A)$ the \textit{minimal
elements} of $A$, formally
$\textsf{Min}(\preceq,A):=\{(s,e)\in{A}~\mid~\forall{(s',e')\in{A}}:(s',e')\preceq{(s,e)}\Rightarrow{(s',e')
= (s,e)}\}$.
A set $A\subseteq{SE(c)}$ is an \textit{antichain} if all pairs
$(s,e)\neq{(s',e')}\in{A}$ are $\preceq$-incomparable. An
$\preceq$-$\textit{upward-closed}$ set is symbolically represented by
its minimal elements, as the former is \textit{uniquely} determined by
the latter:
if $A\subseteq{SE(c)}$ is an upward closed set, then
${\textsf{Up}(\preceq, \textsf{Min}{(\preceq,A))} = A}$.

\paragraph{Description of the algorithm from~\cite{ChatterjeeRR14} by
operations on antichains}  Bouyer et al.~\cite{BouyerFLMS08} present an
alternative description to the fixed point solution in terms of
sufficient \textit{infimum credits}. Since every element $U$ of the
$\textsf{Cpre}_{c}$ application from ~\cite[Eqn.1]{ChatterjeeRR14} is
$\preceq$-$\textit{upward-closed}$, it can be symbolically represented
by the \textit{antichain}
$\textsf{Min}(\preceq,U)$~\cite{ChatterjeeRR14}.
For a bound $c\in \mathbb{N}$ the bounded operator
$\textsf{CpreMin}_{c}:\mathscr{A}{(SE(c))}\rightarrow{\mathscr{A}(SE(c))}$,
where $\mathscr{A}(SE(c))$ is the set of antichains of $SE(c)$, is
defined by:
\begin{equation}
\epsilon_{min}(\Lambda) = \{(s_{0},
e_{0})\in{SE(c)}\mid{s_{0}}\in{States_{0}}\wedge{e_{0}
= \min_{(s_{0}, s)\in{E}~s.t.~\exists{e_{1} (s,e_{1})\in{\Lambda}}}
\lbrack~\max(0,e_{1}-w(s_{0},s))~\rbrack}\} \label{eq:CpreMinDefSys}
\end{equation}
%
\begin{equation}
\eta_{min}(\Lambda) = \{(s_{1},
e_{1})\in{SE(c)}\mid{s_{1}}\in{States_{1}}\wedge{e_{1} = \max_{(s_{1},
s)\in{E}~s.t.~\exists{e_{0} (s,e_{0})\in{\Lambda}}}
\lbrack~\max(0,e_{0}-w(s_{1},s))~\rbrack}\} \label{eq:CpreMinDefEnv}
\end{equation}
%
\begin{equation} 
\textsf{CpreMin}_{c}(\Lambda) =
\textsf{Min}(\preceq,\epsilon_{min}(\Lambda)\cup{\eta_{min}(\Lambda)}) =
\epsilon_{min}(\Lambda)\cup{\eta_{min}(\Lambda)} \label{eq:CpreMinDefUnion}
\end{equation}
$\textsf{CpreMin}_{c}$ is used to compute the sets of pairs consisting of a
state and the \textit{minimal} initial energy that suffices for player-0 to win for that
state. For every pair of antichains $A,B\in{\mathscr{A}{(SE(c))}}$, we define
the partial order $\sqsubseteq$ such that $A\sqsubseteq{B}$ iff
$\forall{(s_{b},e_{b})\in{B}}~\exists{(s_{a},e_{a})\in{A}}~$ such that
$(s_{a},e_{a})\preceq{(s_{b},e_{b})}$. Note that $A\sqsubseteq{B}$ iff
$\textsf{Up}(\preceq, B)\subseteq\textsf{Up}(\preceq, A)$. 
$\textsf{CpreMin}_{c}$ is a monotone operator over the complete lattice
$(\mathscr{A}{(SE(c))}, \sqsubseteq, \emptyset,
\{(s,0)\mid{s\in{States}}\})$\footnote{For every
$M\subseteq{\mathscr{A}{(SE(c))}}$, $\inf M :=
\textsf{Min}(\preceq,\bigcup{M})$, $\sup M :=
\textsf{Min}(\preceq,\bigcap_{m\in{M}}\textsf{Up}(\preceq, m))$. For 
details and proofs see~\cite{CL01}.}.
Therefore, there exists a \textit{least} fixed point of $\textsf{CpreMin}_{c}$
in this lattice that can be calculated by iterated
applications of $\textsf{CpreMin}_{c}$ to the antichain
$ZS=\{(s,0)\mid{s\in{States}}\}$. $ZS$ assigns every state $0$
initial energy, which is the minimal initial energy that is sufficient for
player-$0$ to win a $0$ steps game. In general, if $\Lambda$ contains states and
minimal energy levels for player-0 to win in $k$ steps, then
$\textsf{CpreMin}_{c}(\Lambda)$ contains those required for $k+1$ steps.

\section{Example Specification: Elevator}
\label{sec:example}

We present an example specification of a controller for an elevator
servicing multiple floors. The environment inputs are pending requests
to floors and the current floor of the cabin. The controller outputs
are commands for moving up, stopping, or moving down. Quantitative
properties of the specification are expressed as weights on transitions.
Negative weights whenever the elevator is not
on the requested floor and positive weights for reaching a requested
floor.

\paragraph{Reactive specification}
The elevator specification is shown in
List.~\ref{lst:elevator}. The environment controls the variable
\texttt{pending} which signals a request of the elevator cabin to a
destination floor given by the variable \texttt{dest\_floor}. The
environment also maintains variables to keep track of the source floor
\texttt{src\_floor} (location of cabin when request arrived) and the
current floor
\texttt{current\_floor}. The system controls the variable \texttt{move}
with the possible moves of the cabin (l.~7).

\begin{figure}[t]
\lstset{language=SMV}
\lstinputlisting[label=lst:elevator, caption={A specification for an
elevator controller consisting of environment variables \texttt{VARENV},
system variables \texttt{VAR}, definitions \texttt{DEFINE}, assumptions
\texttt{ASSUMPTION}, and guarantees \texttt{GUARANTEE}}]
{example/Elevator.altl}
\end{figure}

The specification in List.~\ref{lst:elevator} defines two
abbreviations \texttt{TOP} for the top floor and \texttt{THERE} for the
condition that the current floor is the requested destination floor. The
remainder of the specification defines from l.~13 to l.~26
safety constraints for the environment, i.e., assumptions, and from
l.~28 to l.~31 safety constraints for the system, i.e.,
guarantees. Some important assumptions regard the values of variables
storing the source floor: the value of source floor is the current
floor when a request is issued (l.~14) and source floor and destination
floor do not change while requests are pending (l.~16). Requests are
disabled when the destination floor is reached (l.~18). The assumptions
in l.~21 to l.~26 ensure that the environment sets the current floor
according to the \texttt{move} commands of the system.

The safety constraints for the system ensure that the cabin does not
move up on the top floor (l.~29) and that it does not move down on the
bottom floor (l.~31).

\paragraph{Symbolic elevator game graph}
The assumptions and guarantees in List.~\ref{lst:elevator} describe
the game graph for a reactive game in terms of initial states of the
environment $\theta^e$ and system $\theta^s$ and the transitions of the
environment $\rho^e$ and system $\rho^s$.
The conjunction of all assumptions without the temporal operator \op{G}
defines $\theta^e$ while the conjunction of all guarantees without the
temporal operator \op{G} defines $\theta^s$. In the example of
List.~\ref{lst:elevator} we have no such assumptions or guarantees and
thus $\theta^e \equiv \texttt{TRUE} \equiv \theta^s$.
Similarly, the conjunction of all assumptions with the temporal operator
\op{G} defines $\rho^e$ while the conjunction of all guarantees with the
temporal operator \op{G} defines $\rho^s$. The two guarantees in
List.~\ref{lst:elevator}, ll.~28-31 express that $\rho^s$ evaluates to
\texttt{FALSE} for all states with \texttt{current\_floor=TOP \&
move=UP} or \texttt{current\_floor=0 \& move=DOWN}, i.e., these states
are deadlocks for the system.

A state in the game graph is an assignment to all variables, e.g,
$s_1=$(\texttt{pending}=\texttt{true}, \texttt{src\_floor}=1,
\texttt{dest\_floor}=4, \texttt{current\_floor}=1, \texttt{move=DOWN}).
From this state the assignments to environment variables in successor
states are determined by the transition relation $\rho^e$ and the
assignment to system variable \texttt{move} is restricted by the
transition relation $\rho^s$ to either \texttt{STOP} or \texttt{UP}
leading to two possible successor states of $s_1$.
The number of reachable states in the game graph is 750.
For 10 floors the number of reachable states is 6,000 and for 50 floors
the number of reachable states is 750,000.

\paragraph{Weight definitions}
List.~\ref{lst:weightsClean} shows a definition of weights for
transitions of the elevator. Every weights definition entry
starts with keyword \op{WEIGHT} and the value of the weight followed by
an LTL formula characterizing a set of transitions. As an example, the
first entry in List.~\ref{lst:weightsClean} defines weight 1 for
transitions from states with a pending request and absolute difference
\op{abs}(\texttt{src\_floor}-\texttt{dest\_floor}) = 1 to states with
the request disabled.
As another example, the last entry in
List.~\ref{lst:weightsClean} defines weight -1 for all transitions
from states where a request is pending and the cabin is not at the
destination floor. Intuitively, the weight definition of
List.~\ref{lst:weightsClean} expresses positive weights per
distance traveled.

\begin{figure}
\lstset{language=SMV}
\lstinputlisting[label=lst:weightsClean, caption={A weight definition
for transitions of the elevator with reward for reaching a floor
depending on the distance traveled (between 0 and 4) and punishment -1
for not being at the requested floor. Five entries define six weights
-1, 0, 1, 2, 3, and 4}] {example/Elevator.clean.weights}
\end{figure}

We also present an alternative weight definition in
List.~\ref{lst:weightsStd} consisting of two entries only. The first
entry defines weight 1 for all transitions leaving states where a
request is pending and the cabin is at the destination floor (l.~2). The
second entry defines weight -1 for transitions from states with pending
request where the cabin is not at the requested floor.

\begin{figure}
\lstset{language=SMV}
\lstinputlisting[label=lst:weightsStd, caption={Weight
definition for transitions of the elevator with reward 1 for
reaching a requested floor and punishment -1 for not being at the
requested floor. Two entries define three weights
-1, 0, 1}] {example/Elevator.standard.weights}
\end{figure}

If a transition between states satisfies multiple weight definitions the
values of all weights for the transition are added, e.g., a transition
that satisfies both the formula in line~8 for weight 4 and the formula
in line 10 for weight -1 in List.~\ref{lst:weightsClean} has
weight 3. In case a transition does not satisfy any of the weight
formulas it is assigned the default weight 0, e.g., all transitions from
states where no requests are pending in the weight definition of
List.~\ref{lst:weightsStd}.

The addition of weights for
overlapping sets of transitions and the completion with 0 is
a pre-processing step. This means that the weights definition in
List.~\ref{lst:weightsClean} does not define 5 weights (number of
\op{WEIGHT} entries) but 6 weights (including weight 0 for transitions
defined both in line 2 and line 10). The weights definition in
List.~\ref{lst:weightsStd} defines 3 weights of values -1, 0, and 1.
When referring to the number of weights, e.g., to describe the
complexity of the algorithm the number of weights is the resulting
number of non-overlapping weights and not the number of entries in our
declarative weights specification.

\paragraph{Elevator energy game and initial energy}

The safety constraints of the system and environment in
List.~\ref{lst:elevator} together with the weight definition in
List.~\ref{lst:weightsClean} or List.~\ref{lst:weightsStd} define an
energy game. Intuitively, in an energy game the system starts with a
finite amount of energy and has to make sure that in an infinite play
the accumulated energy does never drop below 0. The accumulated energy
is defined based on the weights of transitions: for every combined step
of the environment and the system (input and output) the weight
definition defines an update of the energy level by adding the weight
value.

For the first weights definition shown in List.~\ref{lst:weightsClean}
the system can win the energy game with a finite initial energy. An
obvious strategy is to always immediately move to a floor once it is
requested by the environment. The accumulated negative weights -1 for
pending requests will then be compensated by the weight for reaching the
destination floor. Interestingly, the elevator can
do even better in terms of minimal energy levels.
The highest minimal required initial energy from any state is 7. One such
worst-case state is (\texttt{pending}=\texttt{true},
\texttt{src\_floor}=4, \texttt{dest\_floor}=4,
\texttt{current\_floor}=1, \texttt{move=DOWN}).
In this state the cabin is on floor 1 but travels to floor 0 while the
environment issued a request to the top floor.\footnote{Note that this
initial state is not required for winning because the system could choose
\texttt{move=UP} but it represents an interesting worst-case
energy level.} The top floor can be reached within 5 steps
(accumulated energy level 7-5=2). The reward for reaching the floor is 0
because \texttt{src\_floor} = \texttt{dest\_floor} (note that this
reward is obtained on outgoing transitions).
The cabin immediately travels to floor 3 where no request is pending
(see assumption in List.~\ref{lst:elevator}, l.~20).
It then arrives at floor 2 still with energy level 2 (because no request
was pending on floor 3) and can now reach all floors within two steps
and thus maintaining at least 0 energy.

For the alternative weights definition shown in
List.~\ref{lst:weightsStd} the system cannot win for any finite
initial energy. From the above example strategy it is clear that with
weight 2 instead of 1 in List.~\ref{lst:weightsStd}, l.~1 the system
could win the energy game\footnote{This non-trivial strategy for winning
with at most weight 2 for 5 floors was indeed found by our
implementation.}.

\section{Our Symbolic Algorithm}
\label{sec:algorithm}

We start with a description
of our \textit{reactive} EG model in terms of the general model of EG,
both presented in Sect.~\ref{sec:preliminaries}, where player-$0$
(maximizer) is the system and player-$1$ (minimizer) is the environment.
Such a description shows that our model is an instance of the general
model despite their differences. We then proceed with the presentation
of a generic version of our symbolic algorithm for reactive EG.

\subsection{Model of Reactive Energy Games}\label{sec:reactiveEG}

Formally, we show a reduction that takes as input a symbolic game
graph $G^{sym} = \langle \theta^e,\theta^s,\rho^e,\rho^s, w\rangle$ for reactive
EG as defined in Sect.~\ref{sec:preliminaries}, and outputs a \textit{bipartite} game graph $\Gamma^{R} = \langle G^{R}=\langle
V^{R},E^{R},w^{R}\rangle, V^{R}_{0}, V^{R}_{1}\rangle$ for EG.
We start by defining two sub graphs of $G^{R}$,
each consists of a simple cycle $C$ such that if any of its vertices is reached
during a play then the players are trapped in $C$
indefinitely. (1) $C^{win}$: a $(+1)$ weight cycle formed by the
vertices $v^{win}_{0}\in{V^{R}_{0}}$, $v^{win}_{1}\in{V^{R}_{1}}$, and the
edges $e^{win}_0=(v^{win}_{0}\, v^{win}_{1})\in{E^{R}}$, $e^{win}_1=(v^{win}_{1}\,
v^{win}_{0})\in{E^{R}}$ such that $w^{R}(e^{win}_0)=1$, $w^{R}(e^{win}_1)=0$. (2)
$C^{loss}$: a $(-1)$ weight cycle formed by the vertices
$v^{loss}_{0}\in{V^{R}_{0}}$, $v^{loss}_{1}\in{V^{R}_{1}}$, and the edges
$e^{loss}_0=(v^{loss}_{0}\, v^{loss}_{1})\in{E^{R}}$, $e^{loss}_1=(v^{loss}_{1}\,
v^{loss}_{0})\in{E^{R}}$ such that $w^{R}(e^{loss}_0)=(-1)$, $w^{R}(e^{loss}_1)=0$.

$\Gamma^{R}$ is constructed by the following two phases: (1) \textit{Takes}
$\langle\theta^e,\theta^s\rangle$ \textit{as inputs:} we add an \textit{initial}
vertex $v_{\emptyset}\in{V^{R}_{1}}$ that corresponds to $\mathscr{V}_{\emptyset}$, i.e., all variables have no values assigned.
In case there are no valid initial states for the environment, i.e., $\theta^e = \emptyset$,
thus all plays on $G^{sym}$ are winning for the system,
we add $e=(v_{\emptyset}, v^{win}_{0})\in{E^{R}}$ with $w^{R}(e)= 0$ leading to $C^{win}$, and output
$\Gamma^{R}$. Otherwise, for every initial state $s^e\in{\theta^e}$ we add a
vertex $v_{s^e}\in{V^{R}_{0}}$ and an edge $e=(v_{\emptyset}, v_{s^{e}})\in{E^{R}}$ such that $w^{R}(e)=0$, whereas for
every $s\in{\theta^e\cap{\theta^s}}$ where $s = s_{e}\cup{s_{s}}$ we add a
vertex $v_{s}\in{V^{R}_{1}}$ and an edge $e=(v_{s^{e}},
v_{s})\in{E^{R}}$ with $w^{R}(e)=0$. For every
$s^{e}\in{\theta^e}$ such that for all $s^{s}\in{\mathscr{V}_{sys}}$, 
$s^{e}\cup{s^{s}}\notin{\theta^s}$, i.e., an initial state which is
a deadlock for the system, we add $e=(v_{s^{e}},v^{loss}_{1})\in{E^{R}}$
with $w^{R}(e)= (-1)$ leading to $C^{loss}$. (2) \textit{Takes}
$\langle\rho^e,\rho^s,w\rangle$ \textit{as inputs:} for every valid environment
transition $t_{s_1, s^{e}_2}\in{\rho^e}$, we
add $v_{s_1}\in{V^{R}_{1}}$, $v_{t_{s_1, s^{e}_2}}\in{V^{R}_{0}}$ and $e =
(v_{s_1}, v_{t_{s_1, s^{e}_2}})\in{E^{R}}$ with $w^{R}(e)=0$. For every
$t_{s_1, s^{e}_2}\in{\rho^e}$ such that for all system choices $s^{s}_2$,
$t_{s_{1},s^{e}_{2},s^{s}_{2}}\notin{\rho^s}$, i.e., it is a deadlock for the
system, we add $e=(v_{t_{s_1, s^{e}_2}},v^{loss}_{1})\in{E^{R}}$ with $w^{R}(e)=
(-1)$ leading to $C^{loss}$. For every
$t_{s_1, s^{e}_2}\in{\rho^e}$ and system choice $s^{s}_2$ such that 
$t_{s_{1},s^{e}_{2},s^{s}_{2}}\in{\rho^s}$, which results in a valid transition
$t_{s_1,s_2}\in{\rho^e\cap{\rho^s}}$ for both players, we add
$v_{s_2}\in{V^{R}_{1}}$ and $e=(v_{t_{s_1, s^{e}_2}},v_{s_2})\in{E^{R}}$ with
$w^{R}(e) = w(s_1,s_2)$. For every state $s_1$ with $v_{s_1}\in{V^{R}_{1}}$ such
that for all environment choices $s^{e}_2$, $t_{s_1, s^{e}_2}\notin{\rho^e}$, i.e., a deadlock state for
the environment, we add $e=(v_{s_{1}},v^{win}_{0})\in{E^{R}}$ with $w^{R}(e)=0$
leading to $C^{win}$. $\Gamma^{R}$ has the following properties:
\begin{lemma}\label{lemma:noInitDeadlocksEnv1}
 Given an initial credit $c\in\mathbb{N}$, $\theta^e = \emptyset$ if, and only
 if for all EG plays $p$ on $\Gamma^{R}$,
 $p=v_{\emptyset}(v^{win}_{0}v^{win}_{1})^{\omega}$, and for all
$k\geq{1}$, $\textsf{EL}(p,k)\geq{0}$.
\end{lemma}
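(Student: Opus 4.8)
The statement is a biconditional, and both directions fall out of inspecting the two mutually exclusive branches of phase~(1) of the construction of $\Gamma^{R}$; the plan is simply to read off the consequences of each branch. Throughout I will use that plays on $\Gamma^{R}$ are the infinite paths starting at the designated initial vertex $v_{\emptyset}$, that the distinguished vertices $v_{\emptyset}, v^{win}_{0}, v^{win}_{1}, v^{loss}_{0}, v^{loss}_{1}$ together with all the vertices $v_{s^{e}}, v_{s}, v_{t_{s_1,s^{e}_2}}$ added by the reduction are pairwise distinct, and that $G^{R}$ has no deadlocks (every vertex the reduction produces is given an outgoing edge, system and environment deadlocks being routed to $C^{loss}$ and $C^{win}$ respectively), so that every finite path from $v_{\emptyset}$ extends to a play.

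For the direction $(\Rightarrow)$ I would assume $\theta^e=\emptyset$. Then the first branch of phase~(1) applies, so the only outgoing edge of $v_{\emptyset}$ is $e=(v_{\emptyset},v^{win}_{0})$ with $w^{R}(e)=0$; since $C^{win}$ is a simple cycle that traps every play reaching it, $v^{win}_{0}\in V^{R}_{0}$ has the unique successor $v^{win}_{1}$ (edge $e^{win}_0$, weight $1$) and $v^{win}_{1}\in V^{R}_{1}$ has the unique successor $v^{win}_{0}$ (edge $e^{win}_1$, weight $0$). Hence the only play on $\Gamma^{R}$ is $p=v_{\emptyset}(v^{win}_{0}v^{win}_{1})^{\omega}$. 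The sequence of weights along $p$ is $0,1,0,1,\dots$, so $\textsf{EL}(p,1)=c$ and $\textsf{EL}(p,k)\geq c$ for every $k\geq 1$; as $c\in\mathbb{N}$, this yields $\textsf{EL}(p,k)\geq 0$ for all $k\geq 1$, which is exactly the right-hand side.

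For the direction $(\Leftarrow)$ I would argue by contraposition. Assume $\theta^e\neq\emptyset$ and pick any $s^{e}\in\theta^e$. Then the second branch of phase~(1) applies: it adds $v_{s^{e}}\in V^{R}_{0}$ and the edge $(v_{\emptyset},v_{s^{e}})\in E^{R}$, and in this branch the edge $(v_{\emptyset},v^{win}_{0})$ is not added. Since $v_{\emptyset}\in V^{R}_{1}$, player-1 may move the pebble from $v_{\emptyset}$ to $v_{s^{e}}$, and by deadlock-freeness this partial play extends to an infinite play $p$ whose first two vertices are $v_{\emptyset}$ and $v_{s^{e}}$. Because $v_{s^{e}}\neq v^{win}_{0}$, we have $p\neq v_{\emptyset}(v^{win}_{0}v^{win}_{1})^{\omega}$, so the right-hand side fails. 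Contraposing gives $\theta^e=\emptyset$, completing the biconditional.

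I do not expect a genuine obstacle: the only point requiring care is that the two branches of phase~(1) are mutually exclusive, so that when $\theta^e=\emptyset$ the edge into $C^{win}$ is the \emph{unique} successor of $v_{\emptyset}$, whereas when $\theta^e\neq\emptyset$ that edge is absent and $v_{\emptyset}$ has instead a successor $v_{s^{e}}$ distinct from $v^{win}_{0}$; the remaining checks (the $C^{win}$ trap, the energy computation, and deadlock-freeness of $G^{R}$) are immediate from the construction.
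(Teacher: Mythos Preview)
Your proof is correct. The paper, however, does not supply a proof of this lemma: Lemmas~1--5 are stated immediately after the construction of $\Gamma^{R}$ and used as facts, with no accompanying argument (the theorems that follow carry the footnote ``We omit the proof from this submission,'' and the lemmas receive not even that). So there is nothing to compare your approach against; what you have written is exactly the direct case analysis on the two mutually exclusive branches of phase~(1) that the construction is designed to make immediate, and the energy-level check and the contrapositive via $v_{s^{e}}\neq v^{win}_{0}$ are the natural verifications.
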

\begin{lemma}\label{lemma:noInitDeadlocksSys}
$s^{e}\in{\theta^e}$ is an initial deadlock state for the system if, and only
if $v_{s^{e}}\in{V^{R}_{0}}$ has one outgoing (-1) weighted edge to $v^{loss}_{1}\in{V^{R}_{1}}$ in ${C^{loss}}$ and
there is no initial energy value $c\in{\mathbb{N}}$ sufficient for player-0 to
win from $v_{\emptyset}\in{V^{R}_{1}}$.
\end{lemma}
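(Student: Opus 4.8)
The plan is to prove both implications directly from the explicit two-phase construction of $\Gamma^{R}$, relying on three structural observations. First, the vertex families produced by the reduction are pairwise disjoint: the environment-choice vertices $v_{s^e}$ are indexed by elements of $\mathscr{V}_{env}$ and lie in $V^{R}_{0}$, the state vertices $v_{s},v_{s_1}$ are indexed by full states in $\mathscr{V}_{env\cup sys}$ and lie in $V^{R}_{1}$, and the transition vertices $v_{t_{s_1,s^{e}_2}}$ are indexed by elements of $\mathscr{V}_{env\cup sys\cup env'}$; so in particular $v_{s^e}$ is never a state vertex or a transition vertex. Second, the only clause of the construction that creates an outgoing edge from a vertex of the form $v_{s^e}$ running into $v^{loss}_{1}$ with weight $(-1)$ is the phase-1 clause for an ``initial state which is a deadlock for the system'', and when that clause fires it contributes the \emph{only} outgoing edge of $v_{s^e}$: the competing phase-1 clause that would add edges $(v_{s^e},v_s)$ for $s\in\theta^e\cap\theta^s$ produces nothing when $s^{e}$ is a system deadlock, and no phase-2 clause acts on $v_{s^e}$. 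Third, $C^{loss}$ is by construction a simple absorbing cycle on $\{v^{loss}_{0},v^{loss}_{1}\}$ whose edges carry weights $(-1)$ and $0$, so any play entering it is trapped and loses $1$ unit of weight per lap.

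For the forward implication, assume $s^{e}\in\theta^e$ is an initial deadlock state for the system, i.e.\ $s^{e}\cup s^{s}\notin\theta^s$ for all $s^{s}\in\mathscr{V}_{sys}$. By the second observation, $v_{s^e}\in V^{R}_{0}$ exists and has exactly one outgoing edge, of weight $(-1)$, to $v^{loss}_{1}$ in $C^{loss}$; this is the first conjunct of the right-hand side. For the second conjunct, note $v_{\emptyset}\in V^{R}_{1}$ is a player-1 (minimizer) vertex and, since $s^{e}\in\theta^e$, it has the weight-$0$ edge $(v_{\emptyset},v_{s^e})$. Take the player-1 strategy that follows this edge; from $v_{s^e}$ player-0 is forced along the weight-$(-1)$ edge into $v^{loss}_{1}$, and from there the play is forced to be $p=v_{\emptyset}\,v_{s^e}\,(v^{loss}_{1}v^{loss}_{0})^{\omega}$. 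Its energy levels satisfy $\textsf{EL}(p,k)\to-\infty$ (the partial weight sums are bounded above and decrease by $1$ per lap), so for every initial credit $c\in\mathbb{N}$ there is a $k$ with $\textsf{EL}(p,k)<0$. Hence $p$ is winning for player-1 and no $c$ suffices for player-0 from $v_{\emptyset}$.

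For the backward implication, assume $v_{s^e}\in V^{R}_{0}$ has a single outgoing edge, of weight $(-1)$, to $v^{loss}_{1}$ in $C^{loss}$. A vertex named $v_{s^e}$ (index in $\mathscr{V}_{env}$, member of $V^{R}_{0}$) is created by the reduction only in phase 1 and only for $s^{e}\in\theta^e$; and again by the second observation its outgoing weight-$(-1)$ edge into $v^{loss}_{1}$ can only have been added by the phase-1 clause whose precondition is precisely ``$s^{e}\in\theta^e$ and $s^{e}\cup s^{s}\notin\theta^s$ for all $s^{s}\in\mathscr{V}_{sys}$''. Therefore $s^{e}$ is an initial deadlock state for the system. (The second conjunct on the right-hand side — that no $c$ suffices from $v_{\emptyset}$ — is in fact implied by the first via the argument of the forward direction, so stating the equivalence with the conjunction is harmless.)

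I expect the main obstacle to be the bookkeeping behind the first two observations: one must walk through every clause of the two-phase construction and verify that no other clause emits an edge out of $v_{s^e}$ (and that $v_{s^e}$ cannot accidentally coincide with a transition or state vertex), and that the system-deadlock clause is the unique producer of a weight-$(-1)$ edge from $v_{s^e}$ into $C^{loss}$. The remaining ingredients — that $C^{loss}$ is absorbing (given directly by the construction, which says any play reaching a vertex of $C^{loss}$ is trapped there forever) and that a cycle of net weight $-1$ drives $\textsf{EL}$ below any finite bound — are immediate or a one-line estimate on partial sums.
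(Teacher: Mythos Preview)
The paper states this lemma without proof (as it does for Lemmas~1 through~7 in Section~\ref{sec:reactiveEG}; even Theorems~1--4 carry a footnote indicating their proofs are omitted from the submission). Your argument is therefore not competing against anything in the paper, but it is the natural direct unpacking of the two-phase construction of $\Gamma^{R}$ and is correct: the forward direction correctly identifies that the system-deadlock clause is the sole source of outgoing edges from $v_{s^e}$ when $s^e$ is a deadlock, and that player-1's choice of $(v_\emptyset,v_{s^e})$ forces the play into $C^{loss}$ with unbounded energy loss; the backward direction correctly observes that an edge $(v_{s^e},v^{loss}_1)$ of weight $-1$ can only have been produced by that same clause. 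Your parenthetical remark that the second conjunct on the right-hand side is redundant (being already implied by the first via the forward-direction argument) is also correct and a useful observation.
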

\begin{lemma}\label{lemma:noInitDeadlocksEnv2}
$s\in{\theta^e\cap{\theta^s}}$ is an initial deadlock state for the environment
in $G^{sym}$ if, and only if $v_{s}\in{V^{R}_{1}}$ has one outgoing 0
weighted edge to $v^{win}_{0}\in{V^{R}_{0}}$ in ${C^{win}}$ and
every initial energy value $c\in{\mathbb{N}}$ suffices for player-0 to win from
$v_{s}$.
\end{lemma}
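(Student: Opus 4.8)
The plan is to read the outgoing edges of $v_s$ directly off the reduction: this settles the structural half of the statement by a short case analysis, and the energy half then follows from the trap property that the reduction already asserts for $C^{win}$.

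First I would fix a state $s = s_e \cup s_s \in \theta^e \cap \theta^s$ and record that phase~(1) of the construction places $v_s$ in $V^R_1$, so $v_s$ is a genuine player-1 vertex \emph{named by a state}. I would then use that the vertices of $\Gamma^R$ fall into three disjoint kinds --- the distinguished vertices $v_\emptyset, v^{win}_0, v^{win}_1, v^{loss}_0, v^{loss}_1$, the state-vertices $v_{s'}$, and the transition-vertices $v_{t_{s_1,s^{e}_{2}}}$ --- so $v_s$ is neither a distinguished vertex nor a transition-vertex. Going through every edge-adding clause, the only clauses that can produce an edge with \emph{source} $v_s$ are the two phase-(2) clauses acting on $V^R_1$-vertices: (a)~for each environment transition $t_{s,s^{e}_{2}} \in \rho^e$, the $0$-weighted edge $(v_s, v_{t_{s,s^{e}_{2}}})$ whose target lies in $V^R_0$ and differs from $v^{win}_0$; and (b)~the single $0$-weighted edge $(v_s, v^{win}_0)$ into $C^{win}$, which is added precisely when $t_{s,s^{e}_{2}} \notin \rho^e$ for every $s^{e}_{2}$. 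Clauses (a) and (b) are mutually exclusive by construction, so $v_s$ has exactly one outgoing edge, namely the $0$-weighted edge to $v^{win}_0$, \emph{if and only if} $s$ is an initial deadlock state for the environment in $G^{sym}$. This biconditional already gives the backward direction $(\Leftarrow)$ of the lemma --- its first hypothesis is precisely the structural side --- and it gives the structural conjunct of the forward direction $(\Rightarrow)$.

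It then remains, for $(\Rightarrow)$, to derive the energy conjunct from the assumption that $s$ is an environment deadlock. By the previous paragraph every play $p$ from $v_s$ must first take the edge to $v^{win}_0$, and by the stated trap property of $C^{win}$ the rest is forced, so $p = v_s (v^{win}_0 v^{win}_1)^\omega$. For any initial credit $c \in \mathbb{N}$, the $0$-edge into $v^{win}_0$ and the weights $1$ on $e^{win}_0$, $0$ on $e^{win}_1$ make the energy level run through $c, c, c+1, c+1, c+2, \ldots$, so $\textsf{EL}(p,k) \geq c \geq 0$ for every $k \geq 1$. Since player-0 has a unique (hence memoryless) strategy from $v_s$ and every resulting play is winning, every $c \in \mathbb{N}$ --- in particular $c = 0$ --- suffices for player-0 to win from $v_s$.

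I expect the only genuinely careful step to be the exhaustive classification of which reduction clauses can add an edge out of $v_s$, together with the disjointness facts (that $v_s$ is neither a distinguished nor a transition-vertex, and that no $v_{t_{s,s^{e}_{2}}}$ coincides with $v^{win}_0$): these are exactly what makes the structural condition on $v_s$ (having only the outgoing $0$-edge to $C^{win}$) equivalent to $s$ being an environment deadlock. The energy computation and the appeal to the trap property of $C^{win}$ already asserted for the reduction are routine. The companion Lemmas~\ref{lemma:noInitDeadlocksEnv1} and~\ref{lemma:noInitDeadlocksSys} should go through by the same template, with $C^{loss}$, system deadlocks, and the corresponding $V^R_0$/$V^R_1$ vertices substituted appropriately.
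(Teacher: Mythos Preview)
Your argument is correct. The paper does not actually supply a proof of Lemma~\ref{lemma:noInitDeadlocksEnv2} (nor of the neighbouring Lemmas~\ref{lemma:noInitDeadlocksEnv1}--\ref{lemma:noDeadlocksEnv}); it simply states them as immediate consequences of the reduction and moves on. What you have written is exactly the kind of unwinding of the construction that the paper leaves implicit: a case analysis on the edge-adding clauses of phase~(2) to pin down the outgoing edges of $v_s$, followed by the routine energy-level computation along the forced play into $C^{win}$. Your observation that the energy conjunct on the right-hand side is redundant for the backward direction (the structural conjunct alone already forces $s$ to be an environment deadlock) is also correct and worth making explicit.
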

\begin{lemma}\label{lemma:noDeadlocksSys}
A valid transition from $s_1$ for environment choice $s^{e}_2$, $t_{s_1,
s^{e}_2}\in{\rho^e}$, is a deadlock for the system if, and only if $v_{t_{s_1,
s^{e}_2}}\in{V^{R}_{0}}$ has one outgoing (-1) weighted edge to $v^{loss}_{1}\in{V^{R}_{1}}$ in ${C^{loss}}$ and
there is no initial energy value $c\in{\mathbb{N}}$ sufficient for player-0 to
win from $v_{s_1}\in{V^{R}_{1}}$.
\end{lemma}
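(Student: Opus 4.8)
The plan is to unfold the construction of $\Gamma^{R}$, in particular phase (2) which takes $\langle\rho^e,\rho^s,w\rangle$ as input, and to read off the outgoing edges of the vertex $v_{t_{s_1, s^{e}_2}}\in{V^{R}_{0}}$. By that phase, for each system choice $s^{s}_2$ with $t_{s_1,s^{e}_2,s^{s}_2}\in{\rho^s}$ we added exactly one edge $(v_{t_{s_1, s^{e}_2}}, v_{s_2})$ of weight $w(s_1,s_2)$, where $s_2$ is the resulting joint state; and the only other candidate outgoing edge of $v_{t_{s_1, s^{e}_2}}$ is the single $(-1)$-weighted edge to $v^{loss}_1$, which is added precisely when no such $s^{s}_2$ exists, i.e., when $t_{s_1, s^{e}_2}$ is a deadlock for the system. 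Accordingly the proof splits into: (i) the structural condition on $v_{t_{s_1, s^{e}_2}}$ stated in the lemma is equivalent to $t_{s_1, s^{e}_2}$ being a system deadlock; and (ii) this deadlock property forces that no finite initial credit lets player-0 win from $v_{s_1}$.

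For part (i) I would argue both directions from the construction. If $t_{s_1, s^{e}_2}$ is a system deadlock, phase (2) adds the edge $(v_{t_{s_1, s^{e}_2}}, v^{loss}_1)$ with weight $-1$ and adds no edge $(v_{t_{s_1, s^{e}_2}}, v_{s_2})$; since $G^{R}$ has no deadlocks, this is the unique outgoing edge of $v_{t_{s_1, s^{e}_2}}$. Conversely, if $t_{s_1, s^{e}_2}$ is not a system deadlock there is some $s^{s}_2$ with $t_{s_1,s^{e}_2,s^{s}_2}\in{\rho^s}$, hence an outgoing edge of $v_{t_{s_1, s^{e}_2}}$ to some $v_{s_2}$, and moreover no edge from $v_{t_{s_1, s^{e}_2}}$ to $v^{loss}_1$ is ever created: inspecting the reduction, the only incoming edges of $v^{loss}_1$ are $e^{loss}_0$ (from $v^{loss}_0$), the edges from $v_{s^e}$-type vertices added in phase (1) for system-deadlocked initial states, and the edges from $v_{t_{s_1, s^{e}_2}}$-type vertices added in the deadlock case of phase (2) — and these three families of source vertices are pairwise distinct. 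Hence the structural condition holds if and only if $t_{s_1, s^{e}_2}$ is a system deadlock.

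For part (ii), assume $t_{s_1, s^{e}_2}$ is a system deadlock. By phase (2) we have $v_{s_1}\in{V^{R}_{1}}$ and the weight-$0$ edge $(v_{s_1}, v_{t_{s_1, s^{e}_2}})\in{E^{R}}$, so the player-1 (minimizer) vertex $v_{s_1}$ admits the move to $v_{t_{s_1, s^{e}_2}}$; from there player-0's only move is the $(-1)$-edge into $v^{loss}_1$, and once a vertex of $C^{loss}$ is reached the play is trapped in the simple cycle $C^{loss}$ forever. So, regardless of player-0's strategy, when player-1 plays this way the play is $p=v_{s_1}\,v_{t_{s_1, s^{e}_2}}\,v^{loss}_1(v^{loss}_0 v^{loss}_1)^{\omega}$, along which the weights are $0$, then $-1$, then $0,-1,0,-1,\dots$; hence $\textsf{EL}(p,k)\to-\infty$ and in particular $\textsf{EL}(p,k)<0$ for all large $k$. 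Thus for every $c\in{\mathbb{N}}$ and every player-0 strategy this play is winning for player-1, so player-0 has no winning strategy from $v_{s_1}$ for $c$; i.e., no initial credit suffices.

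Combining the parts: the forward direction of the lemma follows from part (i) (the structural condition) together with part (ii) (no sufficient credit), and the backward direction already follows from part (i) alone, since the structural condition forces $t_{s_1, s^{e}_2}$ to be a system deadlock. This mirrors the proof of Lemma~\ref{lemma:noInitDeadlocksSys}, of which it is the non-initial analogue. I expect the only mildly delicate point to be the bookkeeping in part (i) — verifying that no construction step other than the system-deadlock case of phase (2) adds an edge from a $v_{t_{s_1, s^{e}_2}}$-vertex into $v^{loss}_1$ — which relies on the vertices introduced by the reduction being labelled consistently and kept distinct.
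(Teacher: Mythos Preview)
The paper states this lemma without proof: Lemmas~\ref{lemma:noInitDeadlocksEnv1}--\ref{lemma:transitions} are all presented as immediate consequences of the reduction that builds $\Gamma^{R}$ from $G^{sym}$, with no accompanying argument (the explicit omission footnote applies only to Theorems~\ref{theorem:modelEquivalence1}--\ref{theorem:modelEquivalence4}). So there is no paper proof to compare against.

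Your argument is correct and is exactly the kind of unpacking the paper implicitly relies on. The two-part structure is appropriate: part~(i) reads the unique outgoing edge of $v_{t_{s_1,s^{e}_2}}$ directly off phase~(2) of the construction, and your case analysis of the incoming edges of $v^{loss}_1$ is accurate; part~(ii) correctly exhibits the player-1 strategy $v_{s_1}\to v_{t_{s_1,s^{e}_2}}$ and computes the resulting weight sequence $0,-1,0,-1,\ldots$ along the trap in $C^{loss}$, which indeed defeats every finite initial credit. Your observation that the backward direction already follows from the structural condition alone (without the energy clause) is also right, and matches how the analogous Lemma~\ref{lemma:noInitDeadlocksSys} must be read. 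The one point you flag as delicate---that no other construction step creates an edge from a $v_{t_{s_1,s^{e}_2}}$-type vertex into $v^{loss}_1$---is handled adequately by your enumeration of the three families of sources of such edges.
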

\begin{lemma}\label{lemma:noDeadlocksEnv}
$s_2$ is a deadlock state for the environment in
$G^{sym}$ with a predecessor state $s_1$ such that
$t_{s_1,s_2}\in{\rho^e\cap{\rho^s}}$ if, and only if $v_{s_2}\in{V^{R}_{1}}$ has
one outgoing 0 weighted edge to $v^{win}_{0}\in{V^{R}_{0}}$ in ${C^{win}}$ and
every initial energy value $c\in{\mathbb{N}}$ suffices for player-0 to win from
$v_{s_2}$.
\end{lemma}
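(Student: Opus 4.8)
The plan is to prove the biconditional by unfolding the two phases of the construction of $\Gamma^{R}$ and then analysing the energy-game plays that pass through the relevant vertices. First I would establish the \emph{forward} direction. Suppose $s_2$ is a deadlock state for the environment in $G^{sym}$, i.e.\ for all environment choices $s^{e}_3$ we have $t_{s_2,s^{e}_3}\notin{\rho^e}$, and suppose it has a predecessor $s_1$ with $t_{s_1,s_2}\in{\rho^e\cap\rho^s}$. I must check that the vertex $v_{s_2}$ is actually created by the construction: since $t_{s_1,s_2}\in{\rho^e\cap\rho^s}$, phase~(2) introduces $v_{s_2}\in{V^{R}_{1}}$ together with the edge $e=(v_{t_{s_1,s^{e}_2}},v_{s_2})$ of weight $w(s_1,s_2)$, so $v_{s_2}$ exists and lies in $V^{R}_{1}$. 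Because $s_2$ is an environment deadlock, the clause ``for every state $s_1$ with $v_{s_1}\in V^{R}_1$ such that for all environment choices $s^{e}_2$, $t_{s_1,s^{e}_2}\notin\rho^e$'' applies with $s_1$ replaced by $s_2$, and the construction adds exactly one edge $e=(v_{s_2},v^{win}_{0})$ with $w^{R}(e)=0$ into $C^{win}$; no other outgoing edge of $v_{s_2}$ is ever added, since all the other phase-(2) clauses that emit edges out of a $V^{R}_1$-vertex require a valid environment transition from that state. Hence $v_{s_2}$ has the claimed unique outgoing $0$-weighted edge to $C^{win}$.

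For the energy claim in the forward direction I would then argue that any play reaching $v_{s_2}$ is forced from that point into the cycle $C^{win}=(v^{win}_{0}v^{win}_{1})^{\omega}$: the only move out of $v_{s_2}$ leads to $v^{win}_0$, and once on $C^{win}$ the only edges available keep the pebble on the cycle, with weights alternating $1,0,1,0,\dots$, so the energy level is non-decreasing after entering the cycle. Therefore, given \emph{any} initial credit $c\in\mathbb{N}$, a play that reaches $v_{s_2}$ with non-negative accumulated energy stays winning for player-$0$ forever; and by picking $c$ large enough (in fact $c$ equal to the negative of the minimum partial sum of weights along the finite prefix up to $v_{s_2}$, clamped at $0$, which exists since $V^R$ is finite and weights are integers) every play through $v_{s_2}$ is winning. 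This gives ``every initial energy value $c\in\mathbb{N}$ suffices for player-$0$ to win from $v_{s_2}$.'' (Here I would reuse the style of reasoning already applied in Lemma~\ref{lemma:noInitDeadlocksEnv1} and Lemma~\ref{lemma:noInitDeadlocksEnv2}, where reaching $C^{win}$ makes all sufficiently-credited plays winning.)

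For the \emph{backward} direction, assume $v_{s_2}\in V^{R}_{1}$ has a single outgoing $0$-weighted edge to $v^{win}_{0}$ in $C^{win}$ and every $c\in\mathbb{N}$ suffices for player-$0$ to win from $v_{s_2}$. Inspecting the construction, the \emph{only} clause in phase~(2) that creates an edge from a $V^{R}_1$-vertex into $C^{win}$ is the environment-deadlock clause; so $v_{s_2}$ being so connected forces $s_2$ to be an environment deadlock in $G^{sym}$. Moreover, a vertex $v_{s_2}$ with subscript a full state $s=s_e\cup s_s$ is placed in $V^{R}_1$ only by the phase-(2) clause triggered by some valid $t_{s_1,s_2}\in\rho^e\cap\rho^s$ (or the initial phase-(1) clause, which I would note is covered by Lemma~\ref{lemma:noInitDeadlocksEnv2} and can be excluded here since we are in the non-initial case), so such a predecessor $s_1$ exists. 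I would phrase this as a short ``by construction, the only way $v_{s_2}$ acquires this shape is \ldots'' argument rather than a case explosion.

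The main obstacle I anticipate is not any single deep step but the bookkeeping needed to be sure the outgoing edge of $v_{s_2}$ into $C^{win}$ is genuinely \emph{unique} --- i.e.\ that none of the other phase-(1)/phase-(2) clauses accidentally adds a second outgoing edge to the same vertex, and that the subscripting convention ($v_{s}$ for states, $v_{t_{s_1,s^e_2}}$ for environment-transition vertices) cleanly separates the cases. This is a finite check over the enumerated clauses of the reduction, and I would dispatch it by observing that every clause emitting an edge out of a vertex $v_{s_1}$ with $v_{s_1}\in V^{R}_1$ is guarded either by ``$t_{s_1,s^e_2}\in\rho^e$ for some $s^e_2$'' (the non-deadlock case, producing edges to $v_{t_{s_1,s^e_2}}$) or by ``$t_{s_1,s^e_2}\notin\rho^e$ for all $s^e_2$'' (the deadlock case, producing the single edge to $v^{win}_0$); these guards are mutually exclusive and exhaustive, so exactly one family fires for $v_{s_2}$, and in the deadlock family exactly one edge is produced.
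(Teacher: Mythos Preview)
The paper does not supply a proof of Lemma~\ref{lemma:noDeadlocksEnv}; it is stated without argument, as are Lemmas~\ref{lemma:noInitDeadlocksEnv1}--\ref{lemma:transitions}, and the surrounding theorems carry a footnote explicitly omitting proofs. Your proposal is therefore the only proof on offer, and its overall strategy---unfold the phase-(2) clauses of the reduction, check which outgoing-edge clause fires for an environment-deadlock vertex, and then analyse the forced play into $C^{win}$---is the natural one and is essentially sound.

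Two points deserve tightening. First, your energy argument in the forward direction takes an unnecessary detour: the claim is that player-$0$ wins \emph{from} $v_{s_2}$ for every credit $c\geq 0$, i.e.\ with the play \emph{starting} at $v_{s_2}$. There is no prefix to worry about, so the passage about ``picking $c$ large enough'' to absorb a finite prefix is beside the point. The unique play from $v_{s_2}$ is $v_{s_2}(v^{win}_0 v^{win}_1)^\omega$ with weight sequence $0,1,0,1,\ldots$, so the energy level never drops below $c$ and any $c\in\mathbb{N}$ suffices directly. Second, in the backward direction you correctly observe that a vertex $v_{s_2}\in V^R_1$ indexed by a full state can be created either in phase~(1) (as an initial state in $\theta^e\cap\theta^s$) or in phase~(2) (as the target of some $t_{s_1,s_2}\in\rho^e\cap\rho^s$), and you defer the phase-(1) case to Lemma~\ref{lemma:noInitDeadlocksEnv2}. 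But the right-hand side of the present lemma, taken literally, does not distinguish these provenances, so the existence of a predecessor $s_1$ is not forced by the stated hypothesis alone. The lemma is evidently intended to cover the phase-(2) case as a complement to Lemma~\ref{lemma:noInitDeadlocksEnv2}; you should make that reading explicit rather than leave it in a parenthetical, since otherwise the biconditional as written is slightly too strong.
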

\begin{lemma}\label{lemma:initStates}
A play on $G^{sym}$ starts at $s\in{\theta^e\cap{\theta^s}}$
if, and only if a play on $\Gamma^{R}$ starts with the traversal of
$e_1=(v_{\emptyset}, v_{s^{e}})\in{E^{R}}$ by player-$1$ and
$e_0=(v_{s^{e}}, v_{s})\in{E^{R}}$ by player-$0$, with
$w^{R}(e_1)=w^{R}(e_0)=0$.
\end{lemma}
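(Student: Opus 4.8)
The plan is to prove both directions by directly unwinding phase~(1) of the construction of $\Gamma^{R}$ in Sect.~\ref{sec:reactiveEG}, which is exactly the part that processes $\langle\theta^e,\theta^s\rangle$. The underlying observation is that, when $\theta^e\neq\emptyset$, phase~(1) creates precisely two ``layers'' below the unique initial vertex $v_{\emptyset}\in V^{R}_{1}$: the player-0 vertices $v_{s^e}$ for $s^e\in\theta^e$, reached by $0$-weight edges from $v_{\emptyset}$, and the player-1 vertices $v_{s}$ for $s\in\theta^e\cap\theta^s$ (with $s=s_{env}\cup s_{sys}$), reached by $0$-weight edges from the corresponding $v_{s^e}$. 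This two-step layering is the image in $\Gamma^{R}$ of the single reactive step in $G^{sym}$, split into an environment move (played by player-1, from $v_{\emptyset}$) followed by a system move (played by player-0, from $v_{s^e}$).

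For the ($\Rightarrow$) direction I would take a play on $G^{sym}$ starting at $s\in\theta^e\cap\theta^s$, decompose $s=s^e\cup s^s$ with $s^e=s_{env}$, and note $s^e\in\theta^e$, so $\theta^e\neq\emptyset$ and the ``otherwise'' branch of phase~(1) applies. That branch adds $v_{s^e}\in V^{R}_{0}$ together with $e_1=(v_{\emptyset},v_{s^e})\in E^{R}$, $w^{R}(e_1)=0$, and, since $s\in\theta^e\cap\theta^s$, also $v_{s}\in V^{R}_{1}$ together with $e_0=(v_{s^e},v_{s})\in E^{R}$, $w^{R}(e_0)=0$. Every play on $\Gamma^{R}$ begins at $v_{\emptyset}$; since $v_{\emptyset}\in V^{R}_{1}$ and $v_{s^e}\in V^{R}_{0}$, player-1 may traverse $e_1$ and then player-0 may traverse $e_0$, and because $\Gamma^{R}$ is deadlock-free --- the construction, together with Lemmas~\ref{lemma:noInitDeadlocksEnv1}--\ref{lemma:noDeadlocksEnv}, routes every would-be deadlock into $C^{win}$ or $C^{loss}$ --- this prefix extends to an infinite play. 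Hence a play on $\Gamma^{R}$ with the stated prefix exists.

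For the ($\Leftarrow$) direction I would start from a play on $\Gamma^{R}$ whose first two moves are $e_1=(v_{\emptyset},v_{s^e})$ by player-1 and $e_0=(v_{s^e},v_{s})$ by player-0 with $w^{R}(e_1)=w^{R}(e_0)=0$, and read off the required facts from the out-edges that phase~(1) creates. The out-edges of $v_{\emptyset}$ are either the single edge to $v^{win}_{0}$ (only when $\theta^e=\emptyset$) or the edges $(v_{\emptyset},v_{s^e})$ for $s^e\in\theta^e$; since $v_{s^e}$ exists we are in the latter case, so $s^e\in\theta^e$. The out-edges of $v_{s^e}\in V^{R}_{0}$ are the edges $(v_{s^e},v_{s})$ for $s\in\theta^e\cap\theta^s$ with $s_{env}=s^e$, plus, if $s^e$ is an initial deadlock for the system, the single edge $(v_{s^e},v^{loss}_{1})$ of weight $-1$; since $w^{R}(e_0)=0\neq-1$, the edge $e_0$ is not that deadlock edge, so $e_0=(v_{s^e},v_{s})$ with $s\in\theta^e\cap\theta^s$. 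Thus $s\in\theta^e\cap\theta^s$ is a starting state of $G^{sym}$, as needed.

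The argument is essentially bookkeeping over the construction, so I do not expect a substantial obstacle; the only points that need care are ruling out the two ``spurious'' out-edges --- the $\theta^e=\emptyset$ edge out of $v_{\emptyset}$ and the system-deadlock edge out of $v_{s^e}$ --- which is exactly where the hypotheses ``$v_{s^e}$ and $v_{s}$ exist'' and ``$w^{R}(e_1)=w^{R}(e_0)=0$'' are used, together with double-checking deadlock-freeness of $\Gamma^{R}$ so that the exhibited prefixes genuinely extend to infinite plays. The same verification in fact yields a prefix-by-prefix correspondence, which is what later results will exploit.
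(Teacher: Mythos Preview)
Your proposal is correct and follows the natural approach: the paper states Lemma~\ref{lemma:initStates} without proof, treating it as an immediate consequence of phase~(1) of the construction of $\Gamma^{R}$, and your argument is precisely the direct unwinding of that phase in both directions. The only care points you identify --- ruling out the $\theta^e=\emptyset$ edge to $v^{win}_{0}$ and the weight-$(-1)$ deadlock edge to $v^{loss}_{1}$ --- are exactly the relevant case distinctions, and you handle them correctly.
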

\begin{lemma}\label{lemma:transitions}
A valid $W\in{\mathbb{Z}}$ weighted transition for both players from $s_1$ to
$s_2$, i.e., $t_{s_1,s_2}\in{\rho^e\cap{\rho^s}}$, is taken at step
$j\in{\mathbb{N}}$ of a play $p^{sym}$ on $G^{sym}$ if, and only if at step $2j+2$ of a
play $p$ on $\Gamma^{R}$ player-$1$ traverses $e_1=(v_{s_1}, v_{t_{s_1,
s^{e}_2}})\in{E^{R}}$ and player-$0$ traverses $e_0=(v_{t_{s_1,
s^{e}_2}},v_{s_2})\in{E^{R}}$ such that $w^{R}(e_1)=0$ and $w^{R}(e_0)=W$.
\end{lemma}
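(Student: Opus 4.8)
The plan is to set up an explicit correspondence between plays on $G^{sym}$ and plays on $\Gamma^{R}$ and then read the claimed equivalence off from it. I would begin with the purely syntactic observation that drives the whole argument: inspecting phase~(2) of the construction of $\Gamma^{R}$, the edges $e_1=(v_{s_1},v_{t_{s_1,s^{e}_2}})$ with $w^{R}(e_1)=0$ and $e_0=(v_{t_{s_1,s^{e}_2}},v_{s_2})$ with $w^{R}(e_0)=w(s_1,s_2)$ are both present in $E^{R}$ precisely when $t_{s_1,s^{e}_2}\in\rho^e$ and $t_{s_1,s^{e}_2,s^{s}_2}\in\rho^s$, i.e.\ exactly when $t_{s_1,s_2}\in\rho^e\cap\rho^s$, and then $w^{R}(e_0)=W$. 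Moreover these are the only outgoing edges from the player-$1$ vertex $v_{s_1}$ and from the player-$0$ vertex $v_{t_{s_1,s^{e}_2}}$ apart from the ones leading into $C^{loss}$ or $C^{win}$; this forces a play that avoids the two absorbing cycles to have a fixed ``zig-zag'' shape, alternating between full-state vertices in $V^{R}_{1}$, intermediate choice vertices in $V^{R}_{0}$, and back.

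Next I would define the translation of plays. Given a play $p^{sym}=v_0v_1v_2\ldots$ on $G^{sym}$ with $v_0\in\theta^e\cap\theta^s$ in which no player is ever deadlocked, I map it to
\[
p \;=\; v_{\emptyset}\; v_{v_0^{e}}\; v_{v_0}\; v_{t_{v_0,v_1^{e}}}\; v_{v_1}\; v_{t_{v_1,v_2^{e}}}\; v_{v_2}\;\ldots,
\]
where $v_i^{e}$ denotes the environment part of the state $v_i$ and $v_{(\cdot)}$ is the corresponding vertex of $\Gamma^{R}$. Using Lemma~\ref{lemma:initStates} for the first two edges and the observation above for all subsequent pairs, $p$ is a well-defined play on $\Gamma^{R}$, and by construction from the vertex $v_{v_i}$ reached at the appropriate position the play has an outgoing edge to $v_{t_{v_i,v_{i+1}^{e}}}$ and from there to $v_{v_{i+1}}$, with weights $0$ and $w(v_i,v_{i+1})$. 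Conversely, any play on $\Gamma^{R}$ that starts from $v_{\emptyset}$ and never enters $C^{loss}$ or $C^{win}$ must, by the structural remark above together with Lemma~\ref{lemma:initStates}, have exactly this form, so the translation is a bijection between the two families of plays; the deadlocked behaviours excluded here are precisely the ones governed by Lemmas~\ref{lemma:noInitDeadlocksSys}, \ref{lemma:noInitDeadlocksEnv2}, \ref{lemma:noDeadlocksSys} and~\ref{lemma:noDeadlocksEnv}.

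With the bijection in hand the lemma reduces to matching step indices. The transition $v_j\to v_{j+1}$ of $p^{sym}$ --- the one ``taken at step~$j$'' --- is simulated by player-$1$ traversing $e_1=(v_{v_j},v_{t_{v_j,v_{j+1}^{e}}})$ followed by player-$0$ traversing $e_0=(v_{t_{v_j,v_{j+1}^{e}}},v_{v_{j+1}})$; since the two initial edges of $p$ account for the first steps and each simulated transition accounts for two further steps, these traversals occur exactly at step~$2j+2$ (and the step after it) of $p$, with $w^{R}(e_1)=0$ and $w^{R}(e_0)=w(v_j,v_{j+1})=W$. Substituting $s_1=v_j$ and $s_2=v_{j+1}$ gives one implication, and reading the same chain of equalities backwards --- starting from a play on $\Gamma^{R}$ in which $e_1,e_0$ are traversed at step~$2j+2$, which by the zig-zag structure must be of the form above --- gives the other.

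I expect the routine but error-prone part to be the index bookkeeping in this last step: pinning down, under the step-counting conventions fixed for $G^{sym}$ and for $\Gamma^{R}$, that the offset is exactly $2j+2$ rather than $2j+1$ or $2j+3$. The only genuinely substantive point, beyond unfolding the construction, is surjectivity of the translation --- one must verify that $\Gamma^{R}$ contains no edges besides those listed in phases~(1)--(2), so that a play avoiding $C^{loss}$ and $C^{win}$ cannot escape the zig-zag pattern and hence does come from a genuine play of $G^{sym}$.
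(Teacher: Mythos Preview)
The paper does not actually prove this lemma: Lemmas~\ref{lemma:noInitDeadlocksEnv1}--\ref{lemma:transitions} are stated without proof, and the theorems that follow carry a footnote explicitly omitting their proofs. So there is nothing to compare your argument against.

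On its own merits, your approach is correct and is the natural one. Reading the edge structure off phase~(2) of the construction gives exactly the local fact you need --- that $e_1$ and $e_0$ with the stated weights exist iff $t_{s_1,s_2}\in\rho^e\cap\rho^s$ --- and the explicit play translation you write down is precisely the one implicit in the statement of Theorem~\ref{theorem:modelEquivalence1}, where the paper indexes $v_{s_j}$ at position $2j+2$ of $p$. Your index arithmetic is right: with $v_{v_j}$ at position $2j+2$, the traversal of $e_1$ begins there and $e_0$ at the next step.

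One small remark: you are proving somewhat more than the lemma literally asserts. The lemma is phrased as a local step-correspondence (one transition in $p^{sym}$ versus two edges in $p$), while you construct the full play bijection --- essentially the content of Theorem~\ref{theorem:modelEquivalence1}. That is harmless, and arguably necessary, since the lemma as stated leaves the relationship between $p^{sym}$ and $p$ implicit; your bijection makes that relationship precise. If you wanted a leaner proof of just the lemma, your first paragraph plus a one-line induction on $j$ (anchored at Lemma~\ref{lemma:initStates}) would suffice.
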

\begin{theorem}\footnote{\label{footnote:proofOmission}We omit the proof from
this submission.} Given an initial credit $c\in\mathbb{N}$, for all
$j\geq{1}$, $e\in \mathbb{Z}$, there exists an infinite reactive EG play
$p^{sym}=s_{0}s_{1}\ldots{s_{j-1}}{s_{j}}\ldots$ on $G^{sym}$ with
$\textsf{EL}(p^{sym},j)=e$ if, and only if there exists an EG play\newline
$p = v_{\emptyset}v_{s^{e}_{0}}v_{s_0}v_{t_{s_{0},s^{e}_{1}}}v_{s_1}\ldots{v_{s_{j-1}}{v_{t_{s_{j-1},s^{e}_{j}}}}{v_{s_j}}}\ldots
=
v_{0}v_{1}v_{2}v_{3}v_{4}\ldots{v_{2j}}{v_{2j+1}}{v_{2j+2}}\ldots$
on $\Gamma^{R}$ with
$\textsf{EL}(p,2+2j)=e$.
\label{theorem:modelEquivalence1}
\end{theorem}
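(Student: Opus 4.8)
The plan is to set up an explicit bijection $\Phi$ between the infinite (hence deadlock-free) reactive EG plays $p^{sym}=s_0s_1\ldots$ on $G^{sym}$ and the EG plays on $\Gamma^{R}$ of the displayed shape $p=v_{\emptyset}v_{s^e_0}v_{s_0}v_{t_{s_0,s^e_1}}v_{s_1}\ldots$, and to prove that $\Phi$ preserves the energy level in the precise sense that $\textsf{EL}(p^{sym},j)=\textsf{EL}(\Phi(p^{sym}),2+2j)$ for every initial credit $c\in\mathbb{N}$ and every $j\geq 1$. Both sides of the claimed equivalence are existential statements ranging over exactly these two families of plays, constrained only at step $j$ (resp.\ $2+2j$) by the value $e$, so a weight-preserving bijection delivers the theorem at once.

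For well-definedness of $\Phi$ in the forward direction: an infinite reactive play never hits a deadlock, so $t_{s_i,s_{i+1}}\in\rho^e\cap\rho^s$ for every $i$, i.e.\ $t_{s_i,s^e_{i+1}}\in\rho^e$ and $t_{s_i,s^e_{i+1},s^s_{i+1}}\in\rho^s$ for $s_{i+1}=s^e_{i+1}\cup s^s_{i+1}$. Taking $\Phi(p^{sym})$ to be the displayed sequence, I would check legality against the reduction: the prefix $v_{\emptyset}v_{s^e_0}v_{s_0}$ is legal by Lemma~\ref{lemma:initStates} (and since $s_0\in\theta^e\cap\theta^s$ we have $\theta^e\neq\emptyset$, so by Lemma~\ref{lemma:noInitDeadlocksEnv1} the edge out of $v_{\emptyset}$ is not the one into $C^{win}$), while phase~(2) of the construction adds, for each $i$, precisely the edges $(v_{s_i},v_{t_{s_i,s^e_{i+1}}})\in E^{R}$ (because $t_{s_i,s^e_{i+1}}\in\rho^e$) and $(v_{t_{s_i,s^e_{i+1}}},v_{s_{i+1}})\in E^{R}$ with $w^{R}$-value $w(s_i,s_{i+1})$ (because $t_{s_i,s_{i+1}}$ is valid for both players). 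The vertices alternate between $V^{R}_{1}$ and $V^{R}_{0}$ by construction, and no deadlock edge into $C^{win}$ or $C^{loss}$ is ever taken. Conversely, a play $p$ on $\Gamma^{R}$ of the displayed shape determines $s_0,s^e_1,s_1,\ldots$ by reading off its subscripts; since $p$ visits the ``real'' vertices $v_{s_i}\in V^{R}_{1}$ and $v_{t_{s_i,s^e_{i+1}}}\in V^{R}_{0}$ and uses the edges between consecutive ones, the construction forces $t_{s_i,s^e_{i+1}}\in\rho^e$ and $t_{s_i,s^e_{i+1},s^s_{i+1}}\in\rho^s$, hence (together with $s_0\in\theta^e\cap\theta^s$ from Lemma~\ref{lemma:initStates}) $s_0s_1s_2\ldots$ is a legal infinite reactive play on $G^{sym}$ with $\Phi(s_0s_1\ldots)=p$. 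Lemmas~\ref{lemma:initStates} and~\ref{lemma:transitions} in fact already package these per-step correspondences, so this step amounts to bookkeeping against the reduction; the conclusion is that $\Phi$ is a bijection.

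It remains to establish the energy identity for $p=\Phi(p^{sym})$, which I would prove by induction on $j$. Inspecting the construction, every edge occurring along $p$ has $w^{R}=0$ except the ``transition'' edges $(v_{t_{s_i,s^e_{i+1}}},v_{s_{i+1}})$, whose weight is $w(s_i,s_{i+1})\in\mathbb{Z}$ (the $(+1)$- and $(-1)$-cycle edges and the weight-$(-1)$ deadlock edges are never traversed along $p$). For the base case $j=1$, the prefix of $p$ up to $v_{2+2\cdot 1}=v_{s_1}$ consists of the three weight-$0$ edges $(v_{\emptyset},v_{s^e_0})$, $(v_{s^e_0},v_{s_0})$, $(v_{s_0},v_{t_{s_0,s^e_1}})$ followed by $(v_{t_{s_0,s^e_1}},v_{s_1})$ of weight $w(s_0,s_1)$, so $\textsf{EL}(p,4)=c+w(s_0,s_1)=\textsf{EL}(p^{sym},1)$. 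For the inductive step, passing from $j$ to $j+1$ appends to $p$ exactly the weight-$0$ edge $(v_{s_j},v_{t_{s_j,s^e_{j+1}}})$ and the edge $(v_{t_{s_j,s^e_{j+1}}},v_{s_{j+1}})$ of weight $w(s_j,s_{j+1})$, so $\textsf{EL}(p,2+2(j+1))=\textsf{EL}(p,2+2j)+w(s_j,s_{j+1})$, which matches $\textsf{EL}(p^{sym},j+1)=\textsf{EL}(p^{sym},j)+w(s_j,s_{j+1})$; this closes the induction.

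The only genuinely delicate part is the index bookkeeping --- the offset $2+2j$, the alternation of the two players, and checking that exactly the intended edges with the intended weights are produced by the two phases of the reduction --- but since Lemmas~\ref{lemma:initStates} and~\ref{lemma:transitions} already carry out the per-step matching, I expect no real obstacle beyond careful assembly.
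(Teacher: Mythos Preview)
The paper explicitly omits the proof of this theorem (see the footnote attached to the statement), so there is no paper proof to compare against. Your proposal is correct and is the natural argument one would expect: the reduction from $G^{sym}$ to $\Gamma^{R}$ is designed precisely so that each reactive transition $s_i\to s_{i+1}$ is split into a weight-$0$ environment edge followed by a system edge carrying $w(s_i,s_{i+1})$, and the two initial weight-$0$ edges out of $v_{\emptyset}$ account for the offset $2$ in the index $2+2j$. Your use of Lemmas~\ref{lemma:initStates} and~\ref{lemma:transitions} for the per-step legality and weight bookkeeping is exactly how the paper intends those lemmas to be consumed, and the induction on $j$ for the energy identity is straightforward once the edge weights along $\Phi(p^{sym})$ are identified.
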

\begin{theorem}$^{\ref{footnote:proofOmission}}$
Given an initial credit $c\in\mathbb{N}$, for all $e\in
\mathbb{N}$, $n\in\mathbb{N}$, $1\leq{j}\leq{n}$, there exists a finite winning
reactive EG play $p^{sym}$ on $G^{sym}$ that ends with a deadlock for the
environment, $p^{sym}=s_{0}s_{1}\ldots{s_{n-1}}{s_{n}}$,
with $\textsf{EL}(p^{sym},j)=e$ if, and only if there
exists an EG play $p =
v_{\emptyset}v_{s^{e}_{0}}v_{s_0}v_{t_{s_{0},s^{e}_{1}}}v_{s_1}\ldots{v_{s_{n-1}}{v_{t_{s_{n-1},s^{e}_{n}}}}{v_{s_n}}}({v^{win}_{0}}{v^{win}_{1}})^{\omega}
\newline{=}
v_{0}v_{1}v_{2}v_{3}v_{4}\ldots{v_{2n}}{v_{2n+1}}{v_{2n+2}}({v^{win}_{0}}{v^{win}_{1}})^{\omega}$ on $\Gamma^{R}$ with
$\textsf{EL}(p,2+2j)=e$ and for all
$k\geq{1}$, $\textsf{EL}(p,k)\geq{0}$.
\label{theorem:modelEquivalence2}
\end{theorem}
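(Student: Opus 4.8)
The plan is to establish a bijection between the two classes of objects in the statement --- finite winning reactive EG plays on $G^{sym}$ ending in an environment deadlock, and EG plays on $\Gamma^{R}$ of the displayed shape whose energy never goes below $0$ --- and then to check that under this bijection $\textsf{EL}(p^{sym},j)=\textsf{EL}(p,2+2j)$ for every $j$. This is the finite counterpart of Theorem~\ref{theorem:modelEquivalence1}, and the argument runs in parallel; the only new ingredient is the $(v^{win}_{0}v^{win}_{1})^{\omega}$ suffix produced by the environment deadlock.

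The correspondence is obtained by assembling the ``iff'' Lemmas~\ref{lemma:initStates}, \ref{lemma:transitions}, and~\ref{lemma:noDeadlocksEnv}. Suppose first that $p^{sym}=s_{0}s_{1}\ldots s_{n}$ is a finite winning play ending in an environment deadlock, so $s_{0}\in\theta^e\cap\theta^s$, every $t_{s_{i},s_{i+1}}\in\rho^e\cap\rho^s$, $s_{n}$ is an environment deadlock (with predecessor $s_{n-1}$, which exists since $n\ge 1$), and $\textsf{EL}(p^{sym},j)\ge 0$ for $1\le j\le n$. Lemma~\ref{lemma:initStates} produces the prefix $v_{\emptyset}v_{s^{e}_{0}}v_{s_{0}}$ with two $0$-weighted edges; applied to each step $s_{i}\to s_{i+1}$, Lemma~\ref{lemma:transitions} produces the edge pair $v_{s_{i}}\to v_{t_{s_{i},s^{e}_{i+1}}}\to v_{s_{i+1}}$ of weights $0$ and $w(s_{i},s_{i+1})$; and Lemma~\ref{lemma:noDeadlocksEnv} appends the $0$-weighted edge $v_{s_{n}}\to v^{win}_{0}$ and the cycle $(v^{win}_{0}v^{win}_{1})^{\omega}$ with edge weights $1$ and $0$. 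Concatenation gives precisely the play $p=v_{0}v_{1}v_{2}\ldots v_{2n+2}(v^{win}_{0}v^{win}_{1})^{\omega}$ of the statement. Conversely, given such a $p$, the same three lemmas read backwards recover $s_{0}\in\theta^e\cap\theta^s$ from the $0$-weighted prefix, a valid transition $s_{i}\to s_{i+1}$ from each $0$-/$w$-weighted edge pair, and an environment deadlock at $s_{n}$ from the edge $v_{s_{n}}\to v^{win}_{0}$ (which, by the construction of $\Gamma^{R}$, can only be the deadlock edge of Lemma~\ref{lemma:noDeadlocksEnv}, since no other edge of $\Gamma^{R}$ enters $v^{win}_{0}$ from a $v_{s}$ node); so $p^{sym}=s_{0}\ldots s_{n}$ is a finite play on $G^{sym}$ ending in an environment deadlock.

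It remains to track energy. Every edge of $p$ carries weight $0$ except the edges $v_{2i+1}\to v_{2i+2}$ for $1\le i\le n$, of weight $w(s_{i-1},s_{i})$, and the $+1$ edges inside $C^{win}$. Hence $\textsf{EL}(p,2+2j)=c+\sum_{i=0}^{j-1}w(s_{i},s_{i+1})=\textsf{EL}(p^{sym},j)$ for $1\le j\le n$, which gives $\textsf{EL}(p,2+2j)=e$. For the non-negativity condition, observe that the energy of $p$ at each odd step $2i+1$ equals its energy at the following step $2i+2$ (the padding edge has weight $0$), that $\textsf{EL}(p,1)=\textsf{EL}(p,2)=c\ge 0$, and that every energy level attained once $p$ has entered $(v^{win}_{0}v^{win}_{1})^{\omega}$ equals $\textsf{EL}(p^{sym},n)$ plus a non-negative integer. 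Consequently $\textsf{EL}(p,k)\ge 0$ for all $k\ge 1$ if, and only if $\textsf{EL}(p^{sym},j)\ge 0$ for all $1\le j\le n$, i.e. if, and only if $p^{sym}$ is winning. This matches the two side conditions across the bijection, in both directions, and so proves the equivalence for each fixed triple $(e,n,j)$.

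I expect the only real obstacle to be the index bookkeeping: aligning the factor-two blow-up so that step $j$ of $p^{sym}$ lands on step $2+2j$ of $p$, with the weighted edge on the even index and a weight-$0$ padding edge on the odd index, and verifying that ``for all $k\ge 1$'' on $\Gamma^{R}$ is equivalent to ``for all $1\le j\le n$'' on $G^{sym}$ --- it is exactly the weight-$0$ padding steps and the monotone $C^{win}$ suffix that make these two ranges of constraints coincide. Everything else is a mechanical application of the construction of $\Gamma^{R}$ and of Lemmas~\ref{lemma:initStates}, \ref{lemma:transitions}, and~\ref{lemma:noDeadlocksEnv}.
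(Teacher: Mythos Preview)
The paper explicitly omits the proof of this theorem (see the footnote marker on the statement), so there is no original argument to compare against. Your proposal is the natural one and is essentially correct: build the bijection from Lemmas~\ref{lemma:initStates}, \ref{lemma:transitions}, and~\ref{lemma:noDeadlocksEnv}, then match energy levels step by step, using that all padding edges and the edge into $C^{win}$ carry weight~$0$ while the $C^{win}$ cycle is non-decreasing.

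There is one small indexing slip you should fix. You write ``the energy of $p$ at each odd step $2i+1$ equals its energy at the following step $2i+2$ (the padding edge has weight $0$)''. For $i\ge 1$ this is false: by your own (correct) weight accounting, the edge $v_{2i+1}\to v_{2i+2}$ carries $w(s_{i-1},s_{i})$, not~$0$. The zero-weight padding edge is $v_{2i}\to v_{2i+1}$, so the correct identity is $\textsf{EL}(p,2i+1)=\textsf{EL}(p,2i)$, i.e.\ the odd step equals the \emph{preceding} even step. This does not affect your conclusion, since both $\textsf{EL}(p,2i)$ and $\textsf{EL}(p,2i+2)$ are already known to equal some $\textsf{EL}(p^{sym},\cdot)$; but the sentence as written is incorrect and should be amended.
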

\begin{theorem}$^{\ref{footnote:proofOmission}}$
Given an initial credit $c\in\mathbb{N}$, there exists a finite losing
reactive EG play $p^{sym}$ on $G^{sym}$ that ends with $s^{e}\in{\theta^e}$
which is a deadlock for the system, $p^{sym}=s^{e}$, if, and only if there
exists an EG play $p =
v_{\emptyset}v_{s^{e}}({v^{loss}_{1}}{v^{loss}_{0}})^{\omega}$ on $\Gamma^{R}$, and
there exists $k\geq{1}$ such that $\textsf{EL}(p,k)<{0}$.
\label{theorem:modelEquivalence3}
\end{theorem}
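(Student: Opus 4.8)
The right-hand side of the equivalence is a conjunction: (i) the specific play $p = v_{\emptyset}v_{s^e}(v^{loss}_1 v^{loss}_0)^{\omega}$ exists in $\Gamma^{R}$, and (ii) $\textsf{EL}(p,k)<0$ for some $k\geq 1$. The plan is to prove the two directions separately, in each case reading off the needed incidence facts from the definition of the reduction in Sect.~\ref{sec:reactiveEG} and appealing to Lemma~\ref{lemma:noInitDeadlocksSys} and Lemma~\ref{lemma:initStates}.

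For the ``only if'' direction I would start from the hypothesis that the reactive EG play from the initial environment choice $s^e$ is the finite play $p^{sym}=s^e$; by the reactive EG model this means $s^e\in\theta^e$ and $s^e$ is an initial deadlock for the system, i.e.\ $s^e\cup s^s\notin\theta^s$ for every $s^s\in\mathscr{V}_{sys}$, and such a play is losing for player-$0$ (the system) precisely because it has no legal reaction. Since $\theta^e\neq\emptyset$, phase~(1) of the construction adds $v_{s^e}\in V^{R}_{0}$ and the edge $(v_{\emptyset},v_{s^e})$ of weight $0$, and because $s^e$ is a deadlock for the system it adds \emph{no} edge of the form $(v_{s^e},v_s)$ and instead the single edge $(v_{s^e},v^{loss}_1)$ of weight $-1$ into $C^{loss}$. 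Hence in $\Gamma^{R}$ player-$1$ may take $v_{\emptyset}\to v_{s^e}$, from the player-$0$ vertex $v_{s^e}$ the only continuation is to $v^{loss}_1$, and the play is then trapped in $C^{loss}$; this is exactly $p$, which establishes (i). For (ii) I would compute the energy along $p$: the two leading edges have weights $0$ and $-1$ and thereafter the weights alternate $0,-1$ around $C^{loss}$, so $\textsf{EL}(p,2m)=c-m$ for all $m\geq 1$, and taking $m=c+1$ yields $\textsf{EL}(p,2c+2)=-1<0$. (That no finite credit $c$ suffices for player-$0$ to win from $v_{\emptyset}$ is also exactly the content of Lemma~\ref{lemma:noInitDeadlocksSys}.)

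For the ``if'' direction I would start from a play $p = v_{\emptyset}v_{s^e}(v^{loss}_1 v^{loss}_0)^{\omega}$ of $\Gamma^{R}$. Its first transition is an edge of $E^{R}$ from $v_{\emptyset}$ to $v_{s^e}$, and such a vertex $v_{s^e}$ (an initial environment-choice vertex, distinct from every transition vertex $v_{t_{s_1,s^e_2}}$) together with such an edge is created by phase~(1) only when $\theta^e\neq\emptyset$ and $s^e\in\theta^e$. Its second transition is an edge from the player-$0$ vertex $v_{s^e}$ into $v^{loss}_1$, and inspecting the construction the only rule that adds an edge into $C^{loss}$ from a vertex of this form is the one for an initial state that is a deadlock for the system; hence $s^e$ is an initial deadlock for the system (equivalently, one invokes the ``if'' direction of Lemma~\ref{lemma:noInitDeadlocksSys}, using that $v_{s^e}$ has exactly this one outgoing edge and that, by the computation above, no finite $c$ suffices for player-$0$ to win from $v_{\emptyset}$). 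Therefore the unique maximal reactive EG play starting from $s^e\in\theta^e$ is the finite play $p^{sym}=s^e$, which is losing for player-$0$ because the system is deadlocked, and this is the required play.

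\noindent\textbf{Main obstacle.}
The incidence bookkeeping against the construction and the arithmetic along $C^{loss}$ are routine. The genuinely delicate point will be the ``finite losing play'' semantics and its mirror image in $\Gamma^{R}$: I must argue that an initial environment choice $s^e\in\theta^e$ with no legal system reaction admits a \emph{unique} maximal reactive EG play, namely $p^{sym}=s^e$, that this play is by convention losing for player-$0$, and that on the $\Gamma^{R}$ side this configuration is captured \emph{precisely and only} by the forced descent into $C^{loss}$ --- that is, that $v_{s^e}$ has no other outgoing edge and that no other construction rule yields the edge $(v_{s^e},v^{loss}_1)$. Lemmas~\ref{lemma:noInitDeadlocksSys} and~\ref{lemma:initStates} are precisely the bridge that makes this identification clean.
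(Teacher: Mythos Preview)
The paper explicitly omits the proof of this theorem (the superscript on the theorem head points to the footnote ``We omit the proof from this submission''), so there is no published argument to compare against. Your proposal is correct and is the natural argument one would expect: both directions are read off directly from phase~(1) of the construction of $\Gamma^{R}$, the forward direction uses the explicit edge $(v_{s^e},v^{loss}_1)$ added for initial system deadlocks and a straightforward energy computation along $C^{loss}$, and the converse uses that this edge can only arise from that clause of the construction. Your invocation of Lemma~\ref{lemma:noInitDeadlocksSys} is apt, and your care in distinguishing the initial vertex $v_{s^e}$ from transition vertices $v_{t_{s_1,s^e_2}}$ is exactly the bookkeeping the omitted proof would need. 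One minor remark: in the ``if'' direction you use only hypothesis~(i); hypothesis~(ii) is in fact redundant given~(i), as your own computation in the forward direction shows, so there is nothing missing---it would simply be cleaner to note this redundancy explicitly.
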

\begin{theorem}$^{\ref{footnote:proofOmission}}$
Given an initial credit $c\in\mathbb{N}$, for all $e\in
\mathbb{Z}$, $n\in\mathbb{N}$, $1\leq{j}<{n}$, there exists a finite losing
reactive EG play $p^{sym}$ on $G^{sym}$ that ends with a deadlock for the
system, $p^{sym}=s_{0}s_{1}\ldots{s_{n-1}}{s^{e}_{n}}$,
with $\textsf{EL}(p^{sym},j)=e$ if, and only if there
exists an EG play $p =
v_{\emptyset}v_{s^{e}_{0}}v_{s_0}v_{t_{s_{0},s^{e}_{1}}}v_{s_1}\ldots{v_{s_{n-1}}{v_{t_{s_{n-1},s^{e}_{n}}}}}({v^{loss}_{1}}{v^{loss}_{0}})^{\omega}
=$
$v_{0}v_{1}v_{2}v_{3}v_{4}\ldots{v_{2n}}{v_{2n+1}}({v^{loss}_{1}}{v^{loss}_{0}})^{\omega}$
with $\textsf{EL}(p,2+2j)=e$, and there exists
$k\geq{1}$ such that $\textsf{EL}(p,k)<{0}$.
\label{theorem:modelEquivalence4}
\end{theorem}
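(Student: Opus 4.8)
The plan is to establish the biconditional by exhibiting a step-by-step correspondence between the two plays in each direction, in the style of the omitted proofs of Theorems~\ref{theorem:modelEquivalence1}--\ref{theorem:modelEquivalence3}. I would split the EG play $p$ on $\Gamma^{R}$ into three consecutive segments and match each against the corresponding part of $p^{sym}$ using the lemmas already proved: (i) the opening $v_{\emptyset}\,v_{s^{e}_{0}}\,v_{s_{0}}$, handled by Lemma~\ref{lemma:initStates}, which ties it to the fact that every play $p^{sym}$ starts in some $s_{0}=s^{e}_{0}\cup s^{s}_{0}\in\theta^{e}\cap\theta^{s}$; (ii) the simulation block $v_{s_{0}}v_{t_{s_{0},s^{e}_{1}}}v_{s_{1}}\ldots v_{s_{n-1}}$, with one two-edge macro-step $v_{s_{i}}\,v_{t_{s_{i},s^{e}_{i+1}}}\,v_{s_{i+1}}$ for each valid joint transition $t_{s_{i},s_{i+1}}\in\rho^{e}\cap\rho^{s}$ of $p^{sym}$ $(0\le i\le n-2)$, handled by a chained use of Lemma~\ref{lemma:transitions}; and (iii) the hand-off $v_{s_{n-1}}\,v_{t_{s_{n-1},s^{e}_{n}}}\,(v^{loss}_{1}v^{loss}_{0})^{\omega}$, where, by Lemma~\ref{lemma:noDeadlocksSys} and the construction of $\Gamma^{R}$, the edge leaving $v_{t_{s_{n-1},s^{e}_{n}}}\in V^{R}_{0}$ exists and carries weight $-1$ into $C^{loss}$ exactly when $t_{s_{n-1},s^{e}_{n}}\in\rho^{e}$ is a deadlock for the system. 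Segment (iii) is the only place where this proof departs from that of Theorem~\ref{theorem:modelEquivalence3}, which instead uses Lemma~\ref{lemma:noInitDeadlocksSys} because there the system deadlock is already an initial state.

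For the ``only if'' direction I would start from a finite losing play $p^{sym}=s_{0}s_{1}\ldots s_{n-1}s^{e}_{n}$ ending in a system deadlock with $\textsf{EL}(p^{sym},j)=e$ for the given $1\le j<n$, and assemble $p$ from segments (i)--(iii). The opening edges and the environment half of every macro-step have $w^{R}=0$, while the system half of the $i$-th macro-step has $w^{R}=w(s_{i},s_{i+1})$; since by step $2+2j$ exactly the macro-steps $i=0,\ldots,j-1$ have been completed---this is where $1\le j<n$ is used, so that $v_{2j+2}=v_{s_{j}}$ still lies inside segment (ii) and the final environment move has not yet been counted---I get $\textsf{EL}(p,2+2j)=c+\sum_{i=0}^{j-1}w(s_{i},s_{i+1})=\textsf{EL}(p^{sym},j)=e$. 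Once the pebble reaches $v_{t_{s_{n-1},s^{e}_{n}}}$, Lemma~\ref{lemma:noDeadlocksSys} forces the unique continuation into $C^{loss}$, so $p=v_{\emptyset}v_{s^{e}_{0}}v_{s_{0}}\ldots v_{s_{n-1}}v_{t_{s_{n-1},s^{e}_{n}}}(v^{loss}_{1}v^{loss}_{0})^{\omega}$, which has exactly the asserted shape; and since the weights along $C^{loss}$ sum to $-1$ per cycle, $\textsf{EL}(p,\cdot)$ decreases without bound, so indeed $\textsf{EL}(p,k)<0$ for some $k\ge1$.

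For the ``if'' direction I would run the same correspondence in reverse: given a play $p$ on $\Gamma^{R}$ of the stated shape, the occurrence of the weight-$(-1)$ edge $(v_{t_{s_{n-1},s^{e}_{n}}},v^{loss}_{1})$ identifies, via Lemma~\ref{lemma:noDeadlocksSys}, $t_{s_{n-1},s^{e}_{n}}\in\rho^{e}$ as a deadlock for the system, and reading off the prefix $v_{0}\ldots v_{2n}$ through Lemmas~\ref{lemma:initStates} and~\ref{lemma:transitions} recovers $s_{0}\in\theta^{e}\cap\theta^{s}$ and the valid joint transitions $t_{s_{i},s_{i+1}}\in\rho^{e}\cap\rho^{s}$ for $0\le i\le n-2$. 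Hence $p^{sym}=s_{0}s_{1}\ldots s_{n-1}s^{e}_{n}$ is a finite reactive EG play ending in a system deadlock, i.e., losing for player-$0$, and the weight bookkeeping above again gives $\textsf{EL}(p^{sym},j)=\textsf{EL}(p,2+2j)=e$. The clause ``$\exists k\ge1:\textsf{EL}(p,k)<0$'' is then automatic from the shape of $p$ and merely records that $p$ is winning for player-$1$ on $\Gamma^{R}$.

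The step I expect to be the main obstacle is making segment (ii) precise as a \emph{single} coherent play rather than a family of isolated per-transition statements---in effect a finite-prefix analogue of Theorem~\ref{theorem:modelEquivalence1}, proved by induction on the number of completed macro-steps---together with keeping the index arithmetic consistent at the seam between segments (ii) and (iii): the factor of $2$ between $G^{sym}$ and $\Gamma^{R}$ steps, the two leading weight-$0$ edges accounting for the ``$+2$'' in $\textsf{EL}(p,2+2j)$, and the restriction $1\le j<n$ that keeps $\textsf{EL}(p^{sym},j)$ from ever referring to the final environment move. The rest is routine reuse of the stated lemmas, mirroring the omitted proofs of Theorems~\ref{theorem:modelEquivalence1}--\ref{theorem:modelEquivalence3}.
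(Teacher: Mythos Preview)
The paper explicitly omits the proof of this theorem (footnote~\ref{footnote:proofOmission}: ``We omit the proof from this submission''), so there is no authors' argument to compare against. That said, your proposal is exactly the kind of argument the paper's scaffolding is set up to support: Lemmas~\ref{lemma:initStates} and~\ref{lemma:transitions} are stated precisely so that the opening and the per-transition macro-steps of $p$ can be matched to $p^{sym}$, and Lemma~\ref{lemma:noDeadlocksSys} (together with the explicit construction of $\Gamma^{R}$) handles the hand-off into $C^{loss}$. Your index bookkeeping for $\textsf{EL}(p,2+2j)=\textsf{EL}(p^{sym},j)$ and your use of the $(-1)$-per-cycle weight of $C^{loss}$ to witness $\textsf{EL}(p,k)<0$ are both correct.

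One small point of care: when you invoke Lemma~\ref{lemma:noDeadlocksSys} in the ``if'' direction to conclude that $t_{s_{n-1},s^{e}_{n}}$ is a system deadlock from the presence of the edge $(v_{t_{s_{n-1},s^{e}_{n}}},v^{loss}_{1})$, note that the lemma's right-hand side is a conjunction (the edge \emph{and} the non-winnability from $v_{s_{n-1}}$). What you actually need here is the purely structural fact that this edge is added in phase~(2) of the construction of $\Gamma^{R}$ exactly when $t_{s_{n-1},s^{e}_{n}}\in\rho^{e}$ has no $s^{s}$ with $t_{s_{n-1},s^{e}_{n},s^{s}}\in\rho^{s}$; citing the construction directly (or the structural half of Lemma~\ref{lemma:noDeadlocksSys}) avoids any circularity. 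With that adjustment, your plan is complete and correct.
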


From Lem.~\ref{lemma:noInitDeadlocksEnv1}~-~\ref{lemma:noDeadlocksEnv} we get
that $\Gamma^{R}$ has no deadlocks, and conclude from
Thm.~\ref{theorem:modelEquivalence1}~-~\ref{theorem:modelEquivalence4} that
every EG play on $\Gamma^{R}$ is infinite, therefore compliant with the general
model of EG. Moreover, by
Thm.~\ref{theorem:modelEquivalence1} every
play on $\Gamma^{R}$ in which neither $C^{loss}$ nor $C^{win}$ are visited,
determines a reactive play on $G^{sym}$ with the \textit{same} energy levels,
and vice versa. From Thm.~\ref{theorem:modelEquivalence2} and
Lem.~\ref{lemma:noInitDeadlocksEnv1},~\ref{lemma:noInitDeadlocksEnv2},~\ref{lemma:noDeadlocksEnv}
we get that every EG (winning) play on $\Gamma^{R}$ that visits $C^{win}$ determines a finite
(winning) play on $G^{sym}$ that ends with a deadlock for the environment. From
Thm.~\ref{theorem:modelEquivalence3},~\ref{theorem:modelEquivalence4} and
Lem.~\ref{lemma:noInitDeadlocksSys},~\ref{lemma:noDeadlocksSys} we get that every EG play on
$\Gamma^{R}$ that visits $C^{loss}$ is losing for the system, and it determines a finite losing game on $G^{sym}$ that ends with a deadlock for the system.
Thm.~\ref{theorem:modelEquivalence1}~-~\ref{theorem:modelEquivalence4} focus on
the energy levels of player-1's vertices in $\Gamma^{R}$, which are the ones of
interest for our reactive EG, and imply that an initial energy level $c\in\mathbb{N}$ suffices for player-0 to win from
$v_{s}\in{V^{R}_1}$ in $\Gamma^{R}$ iff $c$ suffices for player-0 to win from
a state $s$ which is reachable from any of the valid initial states of
$G^{sym}$.

\subsection{Generic Version of Our Algorithm} Alg.~\ref{alg:generic}
presents
 a generic version of our symbolic algorithm for reactive EG, which takes as
 input an energy bound
$\texttt{maxEng}\in{\mathbb{N}}$. It performs within the while loop in
line~\ref{alg:generic:whileFixpoint} a least fixed point calculation of
$\textsf{CpreMin}_{c}$ operator from Eqn.~(\ref{eq:CpreMinDefUnion}) by
its iterated applications to the antichain
$\{(s,0)\mid{s\in\text{States}}\}$, initially assigned to
\texttt{minEngPred} in line~\ref{alg:generic:initialAntichain}.
However, the calculation is \textit{optimized} for our model by means of
a different formulation of Eqn.~(\ref{eq:CpreMinDefUnion}).
We present $\textsf{CpreMin}_{c}^{OPT}$ for a reactive EG graph $\Gamma^{R}$ as
constructed by the reduction of Sect.~\ref{sec:reactiveEG},
where $SE_{1}^{R}(c):= V_{1}^{R}\times{\{n\in\mathbb{N}\mid n\leq{c}\}}$
for $c\in\mathbb{N}$:
\begin{equation}\small
\textsf{CpreMin}_{c}^{OPT}(\Lambda)= \{(v_{1},
e_{1})\in{SE_{1}^{R}(c)}\mid{e_{1} =
\max_{(v_{1}, v_{0})\in{E^{R}}}\big(\min_{(v_{0}, v'_{1})\in{E^{R}} s.t.
\exists{e'_{1}}(v'_{1},
e'_{1})\in{\Lambda}}\big(\max{(0,e'_{1}-w^{R}((v_{0},
v'_{1})))}\big)}\big)\}\label{eq:CpreMinOPTNonSymbolic}
\end{equation}
The $\textsf{CpreMin}_{c}^{OPT}$ operator from
Eqn.~(\ref{eq:CpreMinOPTNonSymbolic}), applies
Eqn.~(\ref{eq:CpreMinDefSys}) followed by Eqn.~(\ref{eq:CpreMinDefEnv}),
i.e., $\eta_{min}$ is applied to the minimal energy values that have just been calculated by $\epsilon_{min}$ in the current iteration. This optimization utilizes the reactive property of the model, i.e., its turn
alternation that induces a bipartite game graph, by which the initial energy
values of player-$i$'s vertices calculated in iteration $k$ only depend on the
values of player-$j$'s vertices calculated in iteration $k-1$, $i\neq{j}$.
Moreover, for optimizing Eqn.~(\ref{eq:CpreMinDefEnv}), it utilizes the
property of 0 weight for all outgoing edges from all $v\in{V_{1}^{R}}$.
We denote by $A^{OPT}_{i}$ and $A_{i}$ the $i$'th element of the chain
results from the least fixed point computation on $\Gamma^{R}$ of
Eqn.~(\ref{eq:CpreMinOPTNonSymbolic}) and
Eqn.~(\ref{eq:CpreMinDefUnion}), respectively.
Lem.~\ref{lemma:CpreMinOptimizationEnv} formally states that these two
operators are equivalent when applied to $\Gamma^{R}$, while the
number of iterations (i.e., the
chain's length) until a fixed point is reached is smaller by a factor
of 2 for $\textsf{CpreMin}_{c}^{OPT}$.
\begin{lemma}\label{lemma:CpreMinOptimizationEnv}
Given an initial credit $c\in\mathbb{N}$, for all $i\geq{0}$, 
$A_{2i}\cap{SE_{1}^{R}(c)} = A^{OPT}_{i}$.
\end{lemma}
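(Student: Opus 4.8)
The plan is induction on $i$, using the bipartite shape of $\Gamma^R$ produced by the reduction in Sect.~\ref{sec:reactiveEG}. I would first record three facts that are immediate from the construction and the lemmas preceding this one: (i) $\Gamma^R$ is bipartite between $V_0^R$ and $V_1^R$ --- every edge introduced in the construction, including the two edges of $C^{win}$ and of $C^{loss}$, runs from $V_0^R$ to $V_1^R$ or from $V_1^R$ to $V_0^R$; (ii) every edge leaving a vertex of $V_1^R$ has weight $0$; and (iii) by Lem.~\ref{lemma:noInitDeadlocksEnv1}--\ref{lemma:noDeadlocksEnv}, $\Gamma^R$ has no deadlocks, so every vertex has an outgoing edge. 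For an antichain $\Lambda\in\mathscr{A}(SE(c))$ I write $\Lambda_0:=\Lambda\cap(V_0^R\times\{n\in\mathbb{N}\mid n\le c\})$ and $\Lambda_1:=\Lambda\cap SE_1^R(c)$ for its player-$0$ and player-$1$ parts.

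Next I would decompose a single application of $\textsf{CpreMin}_c$ on $\Gamma^R$. By (i), $\epsilon_{min}$ from Eqn.~(\ref{eq:CpreMinDefSys}) inspects only successors of $V_0^R$-vertices, all in $V_1^R$; hence $\epsilon_{min}(\Lambda)\subseteq V_0^R\times\{n\le c\}$ and depends on $\Lambda$ only through $\Lambda_1$. Dually $\eta_{min}(\Lambda)$ from Eqn.~(\ref{eq:CpreMinDefEnv}) lies in $SE_1^R(c)$ and depends only on $\Lambda_0$, and by (ii) its term $\max(0,e_0-w^R(v_1,v_0))$ collapses to $e_0$ since energy values are non-negative. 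The supports of $\epsilon_{min}(\Lambda)$ and $\eta_{min}(\Lambda)$ are disjoint, so their union is already an antichain and $\textsf{CpreMin}_c(\Lambda)=\epsilon_{min}(\Lambda)\cup\eta_{min}(\Lambda)$, giving $(\textsf{CpreMin}_c(\Lambda))_0=\epsilon_{min}(\Lambda_1)$ and $(\textsf{CpreMin}_c(\Lambda))_1=\eta_{min}(\Lambda_0)$. Applying this twice gives $A_{2i+2}\cap SE_1^R(c)=\eta_{min}(\epsilon_{min}(A_{2i}\cap SE_1^R(c)))$, where $\eta_{min}$ is read in its weight-$0$ form.

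I would then identify the composition $\eta_{min}\circ\epsilon_{min}$, restricted to antichains over $SE_1^R(c)$, with $\textsf{CpreMin}_c^{OPT}$: substituting the definition of $\epsilon_{min}$ (whose value at $v_0$ is the inner minimum over edges $(v_0,v_1')$ whose target carries a value in the argument) into the weight-$0$ form of $\eta_{min}$ (whose value at $v_1$ is the maximum of those $\epsilon_{min}$-values over edges $(v_1,v_0)$) reproduces exactly the right-hand side of Eqn.~(\ref{eq:CpreMinOPTNonSymbolic}). Combined with the base case $A_0\cap SE_1^R(c)=\{(v_1,0)\mid v_1\in V_1^R\}=A_0^{OPT}$ --- the least element of the lattice on which $\textsf{CpreMin}_c^{OPT}$ is iterated, i.e., the $V_1^R$-restriction of $ZS$ --- and the inductive hypothesis $A_{2i}\cap SE_1^R(c)=A_i^{OPT}$, this yields $A_{2(i+1)}\cap SE_1^R(c)=\eta_{min}(\epsilon_{min}(A_i^{OPT}))=\textsf{CpreMin}_c^{OPT}(A_i^{OPT})=A_{i+1}^{OPT}$, closing the induction.

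The delicate point, I expect, is not the bipartite bookkeeping but the interaction with the bound $c$: $\epsilon_{min}$ and $\eta_{min}$ are each \emph{bounded} operators that discard a vertex whose freshly computed minimal energy would exceed $c$, and one must check that their composition coincides with the single bounded operator $\textsf{CpreMin}_c^{OPT}$ --- in particular that a $V_1^R$-vertex is never retained ``for the wrong reason'' when some, but not all, of its $V_0^R$-successors have dropped out of the antichain. This is where (ii) and (iii) are actually used: because all outgoing edges of a $V_1^R$-vertex have weight $0$, such a vertex survives precisely when every one of its $V_0^R$-successors still supplies a value $\le c$, and the max-then-drop reading of $\textsf{CpreMin}_c^{OPT}$ behaves the same way; I would isolate this as a short auxiliary lemma. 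I would also be careful to pin down that the $\textsf{CpreMin}_c^{OPT}$-chain genuinely starts from $\{(v_1,0)\mid v_1\in V_1^R\}$.
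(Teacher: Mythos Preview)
Your proposal is correct and is precisely the argument the paper gestures at in the paragraph immediately preceding the lemma (the paper does not supply a formal proof): the bipartite structure of $\Gamma^R$ lets $\epsilon_{min}$ and $\eta_{min}$ each depend only on the opposite player's part of the antichain, and the zero weights on edges leaving $V_1^R$ collapse $\eta_{min}\circ\epsilon_{min}$ to $\textsf{CpreMin}_c^{OPT}$, so induction on $i$ goes through. The attention you pay to the bound-$c$ boundary case --- checking that a $V_1^R$-vertex is dropped by the composition exactly when it is dropped by the single combined operator --- is more than the paper itself spells out, and is the right place to be careful.
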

We present in line~\ref{alg:generic:OPTCpre} of Alg.~\ref{alg:generic}
the symbolic formulation of Eqn.~(\ref{eq:CpreMinOPTNonSymbolic}), denoted by
$\textsf{CpreMin}_{c}^{symOPT}$, applied to
$G^{sym}$. It invokes the function $f$ of
Eqn.~\ref{eq:FunctionFCpreMinOPTSymbolic} which handles deadlock cases where $w$
is undefined. We
denote by $A^{symOPT}_{i}$ the $i$'th element of the chain resulting from its iterative application by Alg.~\ref{alg:generic}.

\begin{equation}
f(s,s^{e},s^{s}, e', \text{maxEng}) = \begin{cases}
0\ &\mbox{if $t_{s,s^{e}}\notin{\rho^e}$}\\
\text{maxEng}+1\ &\mbox{if $t_{s,s^e,s^s}\notin{\rho^s}$}\\
\max{\lbrack0, e'- w(s,s^e\cup{s^s})\rbrack} &\mbox{otherwise}
\end{cases}\label{eq:FunctionFCpreMinOPTSymbolic}
\end{equation}

\begin{algorithm}[t]
\caption{Generic symbolic fixed point algorithm for reactive energy
games played on a symbolic game graph $G^{sym}$ with initial energy
bound $\texttt{maxEng}\in
\mathbb{N}$ using function $f$ as defined in
Eqn.~(\ref{eq:FunctionFCpreMinOPTSymbolic}) }
\label{alg:generic}
\begin{algorithmic}[1]
  \STATE \textbf{define} minEngStates, minEngPred \textbf{as}
  $(\text{States} \times\mathbb{N})$
  \STATE minEngPred = $\{(s, 0)\mid{s \in
  \text{States}}\}$\label{alg:generic:initialAntichain}
  \WHILE {minEngStates $\neq$ minEngPred}\label{alg:generic:whileFixpoint}
  \STATE minEngStates = minEngPred
  \STATE{minEngPred} =
  $\{(s,e)\in{\text{States}\times{\{0,1,..,\text{maxEng}\}}}~|$\newline
  $\displaystyle ~~~~~~~~~~~~~~~~~
  e=\max_{s^{e}\in{\mathscr{V}_{env}}}\Big(
  \min_{s^{s}\in{\mathscr{V}_{sys}} \text{ s.t.
  }\exists{e'}.(s^{e}\cup{s^{s}},e')\in \text{minEngStates}}\big(
  f(s,s^{e},s^{s}, e',\text{maxEng})\big)\Big)$
  \label{alg:generic:OPTCpre}
  \ENDWHILE
  \RETURN minEngStates
 \end{algorithmic}
\end{algorithm}

\begin{theorem}[Correctness and Completeness of
Alg.~\ref{alg:generic}]\label{theorem:genericAlgCorrectness}
Alg.~\ref{alg:generic} computes the minimal energy value for each state within the bound \texttt{maxEng}.
\end{theorem}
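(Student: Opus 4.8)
The plan is to read Alg.~\ref{alg:generic} as a least fixed point iteration of the operator $\textsf{CpreMin}_{c}^{symOPT}$ of line~\ref{alg:generic:OPTCpre} on $G^{sym}$, to identify that fixed point with the fixed point of $\textsf{CpreMin}_{c}^{OPT}$ (Eqn.~(\ref{eq:CpreMinOPTNonSymbolic})) on the bipartite graph $\Gamma^{R}$ of Sect.~\ref{sec:reactiveEG}, and then to invoke the results already in hand: Lem.~\ref{lemma:CpreMinOptimizationEnv}, the correctness of the antichain fixed point of $\textsf{CpreMin}_{c}$ (Eqn.~(\ref{eq:CpreMinDefUnion})) from~\cite{BouyerFLMS08,ChatterjeeRR14}, and the model-equivalence results Thm.~\ref{theorem:modelEquivalence1}--\ref{theorem:modelEquivalence4}. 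Fix $c=\texttt{maxEng}$; for a state $s$ let $v_{s}\in V_{1}^{R}$ be the player-1 vertex the reduction attaches to $s$; and let $A^{symOPT}_{i}$, $A^{OPT}_{i}$, $A_{i}$ denote the $i$-th iterates, started from $ZS$, of $\textsf{CpreMin}_{c}^{symOPT}$, $\textsf{CpreMin}_{c}^{OPT}$, and $\textsf{CpreMin}_{c}$ respectively. It then suffices to show (i) Alg.~\ref{alg:generic} terminates and returns the stabilised iterate $A^{symOPT}_{\infty}$; (ii) $A^{symOPT}_{\infty}(s)=A^{OPT}_{\infty}(v_{s})$ for every state $s$; and (iii) this common value is the minimal initial credit for player-0 (the system) to win the reactive EG from $s$, where ``no entry in $\{0,\dots,c\}$'' means ``minimal credit $>c$''.

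For (i), I would observe that line~\ref{alg:generic:OPTCpre}, read set-theoretically over the finite set $\text{States}\times\{0,\dots,c\}$, is monotone in the sense that each state's value is non-decreasing along the iterates and a state never re-enters once removed, since $f$ (Eqn.~(\ref{eq:FunctionFCpreMinOPTSymbolic})) is non-decreasing in $e'$ and the ``$\exists e'$'' guard only shrinks the index set of the inner $\min$ as the iteration proceeds; hence the iterates stabilise and the loop of line~\ref{alg:generic:whileFixpoint}, which exits exactly when two consecutive iterates coincide, returns $A^{symOPT}_{\infty}$. For (ii), I would prove by induction on $i$ that $\textsf{CpreMin}_{c}^{symOPT}$ on $G^{sym}$ mirrors $\textsf{CpreMin}_{c}^{OPT}$ on $\Gamma^{R}$ at the vertices $v_{s}$, the core computation being structural: $v_{s}\in V_{1}^{R}$ has, by the reduction, one outgoing edge per valid environment choice $s^{e}$ to $v_{t_{s,s^{e}}}\in V_{0}^{R}$ (or a single $0$-weight edge to $v^{win}_{0}$ when $s$ is an environment deadlock), and each $v_{t_{s,s^{e}}}$ has one outgoing edge of weight $w(s_{1},s_{2})$ per valid system choice to $v_{s_{2}}\in V_{1}^{R}$ (or a single $(-1)$-weight edge to $v^{loss}_{1}$ when it is a system deadlock); unfolding Eqn.~(\ref{eq:CpreMinOPTNonSymbolic}) along these two edge layers reproduces exactly the $\max_{s^{e}}\min_{s^{s}}f(\cdot)$ of line~\ref{alg:generic:OPTCpre}, with the three cases of $f$ encoding the two cycle gadgets ($C^{win}$ has weight $+1$, so that branch contributes minimal credit $0$, matching $f=0$; $C^{loss}$ drains energy, so that branch is losing for every credit, matching $f=\texttt{maxEng}+1$, i.e.\ no entry in the bounded domain). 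Taking $i\to\infty$ gives (ii).

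Claim (iii) is then assembled from existing results: Lem.~\ref{lemma:CpreMinOptimizationEnv} gives $A_{2i}\cap SE_{1}^{R}(c)=A^{OPT}_{i}$, so $A^{OPT}_{\infty}$ and $A_{\infty}$ agree on $V_{1}^{R}$; the least fixed point of $\textsf{CpreMin}_{c}$ on an EG graph assigns every vertex its minimal sufficient initial credit, with the bound $c$ precisely forcing vertices of true minimal credit above $c$ to carry no entry~\cite{BouyerFLMS08,ChatterjeeRR14} (this is the ``within the bound \texttt{maxEng}'' clause); and Thm.~\ref{theorem:modelEquivalence1}--\ref{theorem:modelEquivalence4} with the paragraph following them show that $c$ suffices for player-0 to win from $v_{s}\in V_{1}^{R}$ in $\Gamma^{R}$ iff $c$ suffices to win the reactive EG from $s$ in $G^{sym}$, so the minima coincide. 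Composing the equalities proves the theorem.

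The step I expect to carry the weight is the induction of (ii), and within it the handling of the deadlock-completed, \emph{bounded} operator. The two chains need not agree iterate-by-iterate at the auxiliary vertices — e.g.\ $A^{OPT}_{i}(v^{loss}_{1})$ grows linearly in $i$ while $f$ reports $\texttt{maxEng}+1$ at once — so the equivalence has to be argued at the level of the (finite) fixed points, using monotonicity, rather than step by step; and one has to pin down the corner case where the ``$\exists e'$'' guard leaves the inner $\min$ over an empty index set, reading it as $+\infty$ so that the affected state correctly drops out of $\text{States}\times\{0,\dots,c\}$. Everything else is routine bookkeeping over the edge structure produced by the reduction.
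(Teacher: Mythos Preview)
Your proposal is correct and follows essentially the same architecture as the paper's proof: identify the iterates $A^{symOPT}_{i}$ on $G^{sym}$ with the iterates $A^{OPT}_{i}$ on $\Gamma^{R}$ at the state vertices $v_{s}$, invoke Lem.~\ref{lemma:CpreMinOptimizationEnv} to pass to $A_{i}$, and then appeal to the correctness of $\textsf{CpreMin}_{c}$ from~\cite{BouyerFLMS08,ChatterjeeRR14} together with Thm.~\ref{theorem:modelEquivalence1}--\ref{theorem:modelEquivalence4}.

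One point worth flagging: the paper's proof asserts the iterate-by-iterate equivalence $(s,e)\in A^{symOPT}_{i}\Leftrightarrow(v_{s},e)\in A^{OPT}_{i}$ directly from Thm.~\ref{theorem:modelEquivalence1}--\ref{theorem:modelEquivalence4}, whereas you are more cautious and argue the correspondence only at the fixed point. Your caution is justified: when some environment choice $s^{e}$ from $s$ is a system deadlock, $f$ in Eqn.~(\ref{eq:FunctionFCpreMinOPTSymbolic}) reports $\texttt{maxEng}+1$ immediately and $s$ is dropped from $A^{symOPT}_{1}$, while in $\Gamma^{R}$ the $(-1)$-edge into $C^{loss}$ makes $A^{OPT}_{i}(v_{s})$ climb only linearly, so $v_{s}$ survives in $A^{OPT}_{i}$ for several iterations before exceeding the bound. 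The two chains therefore need not coincide step by step at such $v_{s}$, though they do coincide at the fixed point, which is all that is needed. Your decision to phrase (ii) as a fixed-point statement, and your handling of the empty-$\min$ corner case, tighten exactly this gap; otherwise the two arguments are the same.
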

\begin{proof}

By Lem.~\ref{lemma:CpreMinOptimizationEnv}, if $A^{OPT}_{i}$
contains states and minimal energy levels for player-0 to win in $2i$ steps of a play on $\Gamma^{R}$,
then $A^{OPT}_{i+1}$ contains those required for $2i+2$ steps. We infer from
Lem.~\ref{lemma:transitions} that it is equivalent to an increment of a
single step in the respective play on $G^{sym}$. Therefore, from
Thm.~\ref{theorem:modelEquivalence1}~-~\ref{theorem:modelEquivalence4} we get that for every $i\geq{0}$,
$e\in{\mathbb{N}}$, and for every state $s$ of $G^{sym}$ and its respective vertex
$v_{s}\in{V_{1}^{R}\setminus\{v_{\emptyset},v^{win}_{1},v^{loss}_{1}\}}$:
$(s,e)\in{A^{symOPT}_{i}}$ iff $(v_{s},e)\in{A^{OPT}_{i}}$. It shows that line~\ref{alg:generic:OPTCpre} 
implements Eqn.~\ref{eq:CpreMinDefUnion} and thus the algorithm by Chatterjee et al.~\cite{ChatterjeeRR14} 
for the special case of reactive EG.
\end{proof}

\paragraph{Checking Realizability}

Alg.~\ref{alg:generic} computes a set of winning states and their
required minimal initial energy. To check whether the system can win the
energy game we still have to check whether for every initial choice of
the environment described by $\theta^e$ the system has an initial choice
satisfying $\theta^s$ to select a winning state. For winning states $win
= \{s~|~(s, e) \in \texttt{minEngStates}\}$ this check is
$\forall s^{e} ~\exists s^{s}: s^{e} \in \theta^e \Rightarrow s^{e}
\cup s^{s} \in \theta^s \cap win$. The check has a direct
implementation in BDDs and ADDs.

\paragraph{Running Time Complexity}

The running time complexity of Alg.~\ref{alg:generic} in symbolic steps
is in $O(N\cdot\texttt{maxEng})$. The number of iterations of the
while loop is bounded by the number
of states $N$ times the bound \texttt{maxEng}, i.e., the number of
possible unique configurations of the monotonic \texttt{minEngStates}.
A worst case example that does indeed require $N\cdot\texttt{maxEng} +
2$ fixed point iterations --- it changes the value of a single state by
value 1 in every iteration except in the first and last iterations --- is
a cycle of uneven length $N$ (number of states) with weight 1 from
uneven to even states and weight $-1$ otherwise.\footnote{The
worst case behavior here is due to unrealizability. The same game on a
cycle of even length $N$ requires two iterations.}

\section{BDD Algorithm}
\label{sec:algorithmBDD}
We present a BDD-based implementation of Alg.~\ref{alg:generic} to
compute the minimal initial energy for every state.
Alg.~\ref{alg:bdd} takes as input a bound $\texttt{maxEng} \in
\mathbb{N}$, a definition of weights $\texttt{weights}
\subseteq (\mathbb{Z} \times \text{Set of Transitions})$ which is an
implementation of the weight function $w$\footnote{Every weight appears
once with all transitions of that weight.} as defined in
Sect.~\ref{sec:preliminaries}, and returns a relation
$\texttt{minEngStates}\subseteq(\mathbb{N} \times \text{Set of States})$
of pairs of initial required energy and winning states. The result is empty if no state exists that is winning
for initial energy up to \texttt{maxEng}.
States and transitions are implemented as BDDs over the variables
$env\cup{sys}$ and $env\cup{sys}\cup{env'}\cup{sys'}$, respectively.

\begin{algorithm}[t]
\caption{BDD algorithm for minimal required energy per set of
states for environment transitions $\rho^e$,
system transitions $\rho^s$, $\texttt{weights}
\subseteq (\mathbb{Z} \times
\text{Set of Transitions})$, and energy bound $\texttt{maxEng} \in
\mathbb{N}$.}
\label{alg:bdd}
\begin{algorithmic}[1]
  \STATE \textbf{define} minEngStates, minEngPred \textbf{as}
  ($\mathbb{N} \times$ Set of States) 
  \STATE \textbf{add} (0,
  \texttt{TRUE}) \textbf{to} minEngPred
  \WHILE {minEngStates $\neq$ minEngPred}\label{alg:bdd:whileFixpoint}
    \STATE minEngStates = minEngPred \label{alg:bdd:swap}
    \STATE \textbf{empty} minEngPred \label{alg:bdd:empty}
    \STATE remaining = $\{s\in S~|~  (e, S) \in
    \text{minEngStates}\}$
    \label{alg:bdd:remaining}
    \FOR {\textbf{increasing} best $\in \{0\} \cup \{0 \leq e - w \leq
    \text{maxEng}~|~(e, S)
    \in
    \text{minEngStates} \wedge (w, T) \in \text{weights}\}$}
    \label{alg:bdd:iterateBest}
      \STATE \textbf{define} bestT \textbf{as} Transitions
      \STATE bestT = $\emptyset$\label{alg:bdd:bestT}
      \FOR {$(v, T) \in
      \text{weights}$}\label{alg:bdd:iterateWeights} 
      \STATE S = $\{s \in
      S_{e_v}~|~(e_v, S_{e_v}) \in \text{minEngStates} \wedge e_v - v \leq best\}$\label{alg:bdd:states}
        \STATE \textbf{add} T \textbf{transition to} S \textbf{to}
        bestT\label{alg:bdd:addBestT} 
        \ENDFOR\label{alg:bdd:iterateWeightsEnd}
        \STATE B = (\textbf{forceEnvTo} bestT \textbf{transitions})
        $\cap$ remaining\label{alg:bdd:forceEnvTo} 
        \STATE \textbf{add} (best, B)
        \textbf{to} minEngPred \STATE \textbf{remove} B \textbf{from}
        remaining\label{alg:bdd:addBest} 
    \ENDFOR\label{alg:bdd:iterateBestEnd}
  \ENDWHILE
  \RETURN minEngStates
 \end{algorithmic}
\end{algorithm}

Alg.~\ref{alg:bdd} declares two relations of required energy and states
and initializes \texttt{minEngPred} with energy 0 for all possible
states (\texttt{TRUE}).
The main while loop in line~\ref{alg:bdd:whileFixpoint} implements the
fixpoint computation of Alg.~\ref{alg:generic}.
The algorithm stores the result of the last computation in variable
\texttt{minEngStates} and empties the current relation
\texttt{minEngPred} in lines~\ref{alg:bdd:swap} and \ref{alg:bdd:empty}.
The variable \texttt{remaining} is assigned the union of states that can
maintain at least 0 energy in $k$ steps, i.e., the states from
\texttt{minEngStates} (l.~\ref{alg:bdd:remaining}).

In the for-loop from line~\ref{alg:bdd:iterateBest} to
line~\ref{alg:bdd:iterateBestEnd} the algorithm determines for states in
\texttt{remaining} the value \texttt{best} for $k+1$ steps. It computes
the minimal value \texttt{best} such that the system can force the
environment to use only transitions where the required energy of the
successor for $k$ steps minus the weight of the transition is at most
\texttt{best}. The value \texttt{best} is a combination of required
energy values from \texttt{minEngStates} and weights from
\texttt{weights} (see line~\ref{alg:bdd:iterateBest}). The value 0 is
always included and contributes deadlock states of the environment.

The inner for-loop from line~\ref{alg:bdd:iterateWeights} to
line~\ref{alg:bdd:iterateWeightsEnd} collects all combinations of
transitions to target states \texttt{bestT} that require \texttt{best}
energy value or less for predecessor states.
The algorithm iterates over all weight and transition pairs $(v, T)$ and
for every weight selects in line~\ref{alg:bdd:states} successor states
\texttt{S} with accumulated energy $e_v$ such that reaching a selected
successor requires $e_v - v \leq \texttt{best}$ energy for $k+1$ steps. 
In line~\ref{alg:bdd:addBestT} the combination of \texttt{T} transitions
with \texttt{S} target states is added to \texttt{bestT}. This addition
is implemented in BDD operations as $\texttt{bestT} =
\texttt{bestT} \vee (\texttt{T} \wedge prime(\texttt{S}))$.

Finally, the algorithm computes the set \texttt{B} of remaining
predecessor states from which the system can restrict the environment to
take only \texttt{bestT} transitions in line~\ref{alg:bdd:forceEnvTo}.
The method
\textbf{forceEnvTo} \texttt{bestT}
\textbf{transitions} is implemented in BDD operations as $(\rho^e
\Rightarrow (\rho^s \wedge \texttt{bestT})_{\exists sys'})_{\forall
env'}$. Thus, all states in \texttt{B} require at most \texttt{best}
energy for $k+1$ steps. This is also the minimal value because the
algorithm searched by increasing value of \texttt{best}. The states
\texttt{B} are removed from the \texttt{remaining} states.

\paragraph{Running Time Complexity}

The running time complexity of Alg.~\ref{alg:bdd} in symbolic steps is
in $O(N\cdot\texttt{maxEng}^2\cdot|\texttt{weights}|)$. The iterations
of the while loop in line~\ref{alg:bdd:whileFixpoint} are in $O(N\cdot
\texttt{maxEng})$ as discussed in Sect.~\ref{sec:algorithm}.
The iterations of every execution of the for loop in
line~\ref{alg:bdd:iterateBest} are bounded by \texttt{maxEng} because $0
\leq \texttt{best} \leq \texttt{maxEng}$.
Every execution of the innermost for loop executes the loop body
$|\texttt{weights}|$ times (once for every distinct weight). When a
fixpoint is reached, i.e., $\texttt{minEngStates} =
\texttt{minEngPred}$, the algorithm terminates with the last computed
result of \texttt{minEngStates}.

For the example of the elevator specification in
List.~\ref{lst:elevator}, the weight definition of weights from -1
to 4 shown in List.~\ref{lst:weightsClean}, and $\texttt{maxEng}=100$
we have $N=750$ and $|\texttt{weights}|=6$. The algorithm
reaches a fixed point already after 6 iterations of the outer most
while loop in less than a second.

\paragraph{Correctness and completeness}
Alg.~\ref{alg:bdd} implements
Alg.~\ref{alg:generic}. The update operation in
Alg.~\ref{alg:generic}, l.~\ref{alg:generic:OPTCpre} is implemented
by Alg.~\ref{alg:bdd},
ll.~\ref{alg:bdd:remaining}-\ref{alg:bdd:addBest}. The minimum
is implemented by starting with smallest \texttt{best} and
the maximum is implemented by increasing \texttt{best} until
the environment cannot force higher values for \texttt{best}.
Therefore, by Thm.~\ref{theorem:genericAlgCorrectness} it computes the
minimal energy value for each state within the bound.
%

\section{ADD algorithm}\label{sec:algorithmADD}
Alg.~\ref{alg:add} shows our ADD-based implementation of
Alg.~\ref{alg:generic}.
It takes as input a bound $\texttt{maxEng} \in \mathbb{N}$, a function
$\texttt{weights}:\text{Transitions} \rightarrow \mathbb{Z}$ which is an
ADD implementation of the weight function $w$ as defined in
Sect.~\ref{sec:preliminaries}, assigning weights to valid transitions
for both players, and $\rho^e$ and $\rho^s$ as defined in
Sect.~\ref{sec:preliminaries}. It returns a function
$\texttt{minEngStates}:
\text{States}\rightarrow\mathbb{N}\cup\{+\infty\}$ that assigns every
state the minimal required initial energy, or a $+\infty$ value if it is
not winning for initial energy up to \texttt{maxEng}.
States and transitions are implemented as ADDs over the variables
$env\cup{sys}$ and $env\cup{sys}\cup{env'}\cup{sys'}$, respectively.
We denote the value of an ADD $a: \text{Transitions} \rightarrow
\text{Values}$ for a transition from $s_1$ to $s_2 \in$ States by
$a(s_1,s_2)$.

Alg.~\ref{alg:add} declares two functions, \texttt{minEngStates} and
\texttt{minEngPred}, assigning every state the minimal required initial
energy. Line~\ref{alg:add:zeroEngInit} initializes \texttt{minEngPred}
as the constant 0 function for all states, which is the minimal energy
level sufficient for a 0 steps play.
Alg.~\ref{alg:add} also constructs an ADD representing the weighted game
graph, \texttt{arena}, which models invalid transitions, i.e, deadlocks,
by extending \texttt{weights} such that every invalid environment
transition, i.e., $t_{s,s^e}\notin{\rho^e}$, is assigned a $+\infty$ weight (system
wins), and every invalid system transition, i.e., $t_{s, s^e,
s^s}\notin{\rho^s}$, a $-\infty$ weight (system loses).
The \texttt{arena} construction is implemented by two
\textit{if-then-else} operations for ADDs described in
lines~\ref{alg:add:deadlocksInit} and~\ref{alg:add:arenaInit}.

\begin{algorithm}[t] \caption{ADD algorithm for minimal required energy
per state for environment transitions $\rho^e$, system transitions
$\rho^s$, $\texttt{weights} : \text{Transitions} \rightarrow
\mathbb{Z}$, and energy bound $\texttt{maxEng} \in \mathbb{N}$.}
\label{alg:add}
\begin{algorithmic}[1]
  \STATE \textbf{define} minEngStates, minEngPred \textbf{as} (States
  $\rightarrow$ $\mathbb{N}\cup\{+\infty\}$) \STATE \textbf{define}
  arena, deadlocks \textbf{as} (Transitions $\rightarrow$
  $\mathbb{Z}\cup\{-\infty, +\infty\}$) \STATE \textbf{define}
  accumulatedEng \textbf{as} (Transitions $\rightarrow$
  $\mathbb{N}\cup\{+\infty\}$) \STATE deadlocks(s, s') = \textbf{if}
  $\rho^e$(s, s') \textbf{then} $-\infty$
  \textbf{else} $+\infty$\label{alg:add:deadlocksInit}
  \STATE arena(s, s') = \textbf{if} $(\rho^e\cap{\rho^s})$(s, s')
  \textbf{then} weights(s, s') \textbf{else} deadlocks(s,
  s')\label{alg:add:arenaInit}
  \STATE minEngPred(s) = $0$\label{alg:add:zeroEngInit}
  \WHILE {minEngStates $\neq$ minEngPred}\label{alg:add:whileFixpoint}
    \STATE minEngStates = minEngPred \STATE
    accumulatedEng(s,s') = $\text{minEngStates}(\text{s'})$ $\ominus_{maxEng}$
    arena(s,s')\label{alg:add:applyEngOperator} \STATE minEngPred(s) =
    $\max\limits_{s^{e} \in \mathscr{V}_{env}}\min\limits_{s^{s}
    \in \mathscr{V}_{sys}}$ accumulatedEng(s,
    ${s^{e}}\cup{s^{s}}$)\label{alg:add:maxMinAbstraction}
  \ENDWHILE \RETURN minEngStates
 \end{algorithmic}
\end{algorithm}

The while loop in line~\ref{alg:add:whileFixpoint} performs the same
computation as of Alg.~\ref{alg:generic}, a least fixed point
computation of the operator defined in
Alg.~\ref{alg:generic}, l.~\ref{alg:generic:OPTCpre}, but formulates it as a function rather than a relation due to its
implementation by ADD operations.
Each iteration assigns the resulted function of the previous computation for $k$ steps to \texttt{minEngStates},
and computes an updated function \texttt{minEngPred} for $k+1$ steps by three
ADD operations. The first operation in line~\ref{alg:add:applyEngOperator} whose output is assigned to
\texttt{accumulatedEng} implements the innermost arithmetic computation
of Alg.~\ref{alg:generic}, l.~\ref{alg:generic:OPTCpre}. It computes the
energy levels of states and transitions to
successor states in \texttt{minEngStates}\footnote{Technically, to
refer to \texttt{minEngStates} as successor states its ADD is
primed, i.e., the $sys$ and $env$ variables of
\texttt{minEngStates} are replaced by their primed counterparts.}.
The operator
$\ominus_{maxEng}$ is an implementation of
Eqn.~\ref{eq:FunctionFCpreMinOPTSymbolic} compliant with the deadlocks representation in \texttt{arena}, defined as:
\vspace{-0.2em}
\begin{equation}
f_{1}(s') \ominus_{maxEng}f_{2}(s,s') = \begin{cases}
0\ &\mbox{if $f_{2}(s,s')=+\infty$}\\
+\infty\ &\mbox{if $f_{1}(s')=+\infty$ or
$f_{2}(s,s')=-\infty$}\\
+\infty\ &\mbox{if $f_{1}(s') - f_{2}(s,s') > maxEng$}\\
\max{\lbrack0, f_{1}(s') - f_{2}(s,s')\rbrack} &\mbox{otherwise}
\end{cases}\label{eq:addOminusEngOp}
\end{equation}
In iteration $k$, for each source and target states,
Eqn.~(\ref{eq:addOminusEngOp}) results in the required accumulated
energy level for $k$ steps, that is the subtraction of the corresponding transition's weight from the target state's minimal required
energy for $k-1$ steps. If the transition's weight is $+\infty$,
i.e., winning for the system, it results in a 0 value. Otherwise, unless it
exceeds the \texttt{maxEng} bound, which results in a $+\infty$ value (i.e., this transition is losing for the given bound in $k$ steps), a $\max$ operator (with 0 as its second argument) ensures the resulted accumulated energy is non-negative.
Line~\ref{alg:add:maxMinAbstraction} implements the $\min$ operator
followed by the outermost $\max$ operator of Alg.~\ref{alg:generic}, l.~\ref{alg:generic:OPTCpre}
by two ADD abstractions of \texttt{accumulatedEng} for successor
system and environment choices\footnote{Technically, the ADD
minimum abstraction is done on variables $sys'$ and the maximum
abstraction on variables $env'$.}.

\paragraph{Running Time Complexity} 
The running time complexity of
Alg.~\ref{alg:add} in symbolic steps is as the number of iterations of the while loop in
line~\ref{alg:add:whileFixpoint} which is in $O(N\cdot\texttt{maxEng})$, as
discussed in Sect.~\ref{sec:algorithm}.
For the elevator specification in
List.~\ref{lst:elevator}, weights from List.~\ref{lst:weightsClean}, and
$\texttt{maxEng}=100$ the algorithm
reaches a fixed point after 6 iterations (same number of iterations as
the BDD implementation in Alg.~\ref{alg:bdd}).

\paragraph{Correctness and Completeness} Alg.~\ref{alg:add} formulates
Alg.~\ref{alg:generic} in terms of functions rather than relations.
Therefore, by Thm.~\ref{theorem:genericAlgCorrectness} it computes the
minimal energy value for each state within the bound.

\section{Implementation and Preliminary Evaluation}
\label{sec:evaluation}

We have implemented Algorithms~\ref{alg:bdd} and~\ref{alg:add} in Java,
based on an extended version of JTLV~\cite{PnueliSZ10}. We have extended JTLV
with support for ADDs as provided by CUDD. We use CUDD via JNI.

We show a comparison of the two algorithms for computing the minimal
required energy from all winning states, using the
elevator example from Sect.~\ref{sec:example}. The questions our
evaluation addresses are:
\begin{packed_itemize}
  \item How do the ADD and BDD algorithms scale for
  elevator specifications with increasing floors?
  \item Do they scale differently for the weight specification with many
  or few weights?
  \item What is the impact of the choice of the bound \texttt{maxEng} on
  the running time?
\end{packed_itemize}

We run all experiments on an ordinary PC, Intel i7 CPU 3.4GHz, 16GB RAM
with Windows 7 64-bit OS. Our implementation uses CUDD 2.5.0 compiled for 32Bit.
The algorithms are implemented in a modified version of JTLV in Java 7 32Bit.
Our implementations are not distributed and use only a single core of the CPU.
We have run the algorithms on every specification 12 times and report average
times if not stated otherwise.
All times are wall-clock times as measured by Java.\footnote{The automatic variable
reordering of CUDD made running times extremely unpredictable with time
differences up to a factor of five when executing the same synthesis problem. 
Therefore, we have disabled any reordering of BDD and ADD variables.
Both implementations use the order of variables as they appear in the
specification. Without reordering the running times we measured appeared
to be more stable.}

\paragraph{Increasing Number of Floors}
\label{sec:increasingFloors}

To measure how the BDD and ADD algorithms scale on the elevator
specification when increasing the number of floors, we have created
specifications similar to the one shown in List.~\ref{lst:elevator}
with increased numbers of floors in steps of 5 from 5 to 50 floors. The
specification for $n$ floors differs from List.~\ref{lst:elevator} in
ll.~3-5 in an updated range of the floor variables to \texttt{0..$n-1$}
and in l.~10 updated to the new top floor \texttt{TOP := $n-1$}.

For our experiments we have used the weight definition in
List.~\ref{lst:weightsClean} with positive weight values of the
distance traveled to each requested floor. The updated weight
definition for $n$ floors has $n$ entries and defines $n+1$ weights from $-1$ to
$n-1$.
For all experiments we have chosen the bound \texttt{maxEng} of the
maximal initial energy per state to be 100, independent of the number of
floors. This bound makes the elevator specification realizable for all
numbers of floors in the experiment.

\begin{figure}
  \centering
  \includegraphics[width=.44\textwidth]{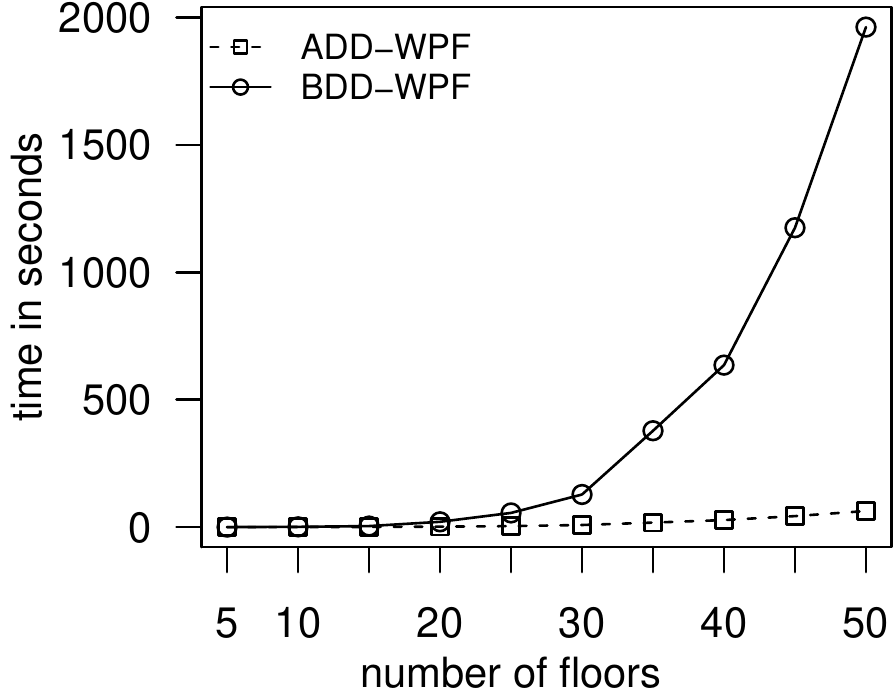}
  ~~~~~~~~
  \includegraphics[width=.44\textwidth]{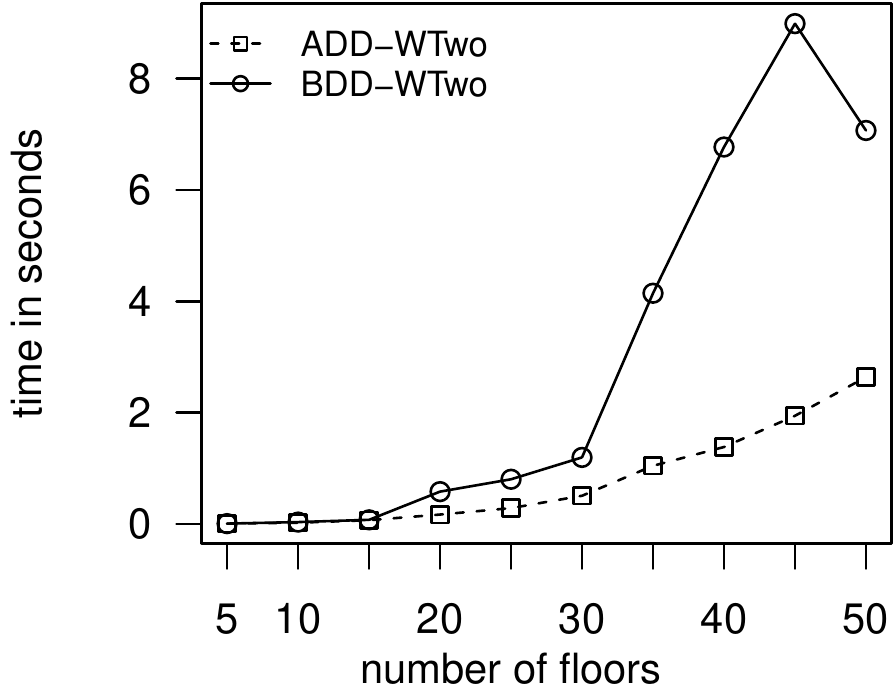}
  \caption{{Running times of ADD and BDD algorithms on elevator
  specification from List.~\ref{lst:elevator} with increasing number
  of floors from 5 to 50 and the energy bound $\texttt{maxEng} = 100$.
  \textbf{Left:} Weight definition of List.~\ref{lst:weightsClean}
  with weights per floor (WPF) adapted to 5 to 50 floors.
  \textbf{Right:} Weight definition of List.~\ref{lst:weightsStd}
  with two weight entries (WTwo) adapted to 5 to 50 floors.
  }}
  \label{fig:increasingFloors}
\end{figure}

The average running times for each specification
are summarized in Fig.~\ref{fig:increasingFloors}~(left). For each
number of floors we present the data points for the ADD algorithm and the
BDD algorithm, measured in seconds. Up to 10 floors the running time is
below a second. The running time for 50 floors of the ADD algorithm is
around one minute while the BDD algorithm runs for 33 minutes. 

From Fig.~\ref{fig:increasingFloors}~(left) it is very clear 
that the ADD algorithm scales much better than the BDD algorithm for elevator
specifications with increasing number of floors.

\paragraph{Different Weight Specifications}

To compare the running time of both algorithms for different weight
specifications we have executed the same experiment as in
Sect.~\ref{sec:increasingFloors} with the alternative weight definition
shown in List.~\ref{lst:weightsStd}. For the elevator with $n$
floors we have changed the single positive weight for reaching a
requested floor in List.~\ref{lst:weightsStd}, l.~1 to $n$. In this
experiment the number of floors ranges again from 5 to 50 in steps of 5
and the number of weights is always 3 ($-1$, $0$, and $n$).

The average running times for each specification
are summarized in Fig.~\ref{fig:increasingFloors}~(right). Again,
the ADD algorithm performed much better than the BDD algorithm.
When comparing absolute running times shown in
Fig.~\ref{fig:increasingFloors}~(left) and (right) it is clear that a different
weight specification for the same synthesis problem can make a significant
difference in running times. For 50 floors the difference in
running times between 3 weights defined in List.~\ref{lst:weightsStd}
and 51 weights defined in List.~\ref{lst:weightsClean} is of factor
277 for the BDD algorithm and of factor 24 for the ADD algorithm.

\paragraph{Choice of Bound \texttt{maxEng}}

The two algorithms we have implemented are both bounded by the maximal initial
energy \texttt{maxEng}. We are thus interested in the
influence of the bound on the running times of the algorithms on specifications
of the elevator. We run both the ADD and the BDD algorithm for increasing bounds
\texttt{maxEng}.

\begin{figure}[t]
  \centering
  \includegraphics[width=.9\textwidth]{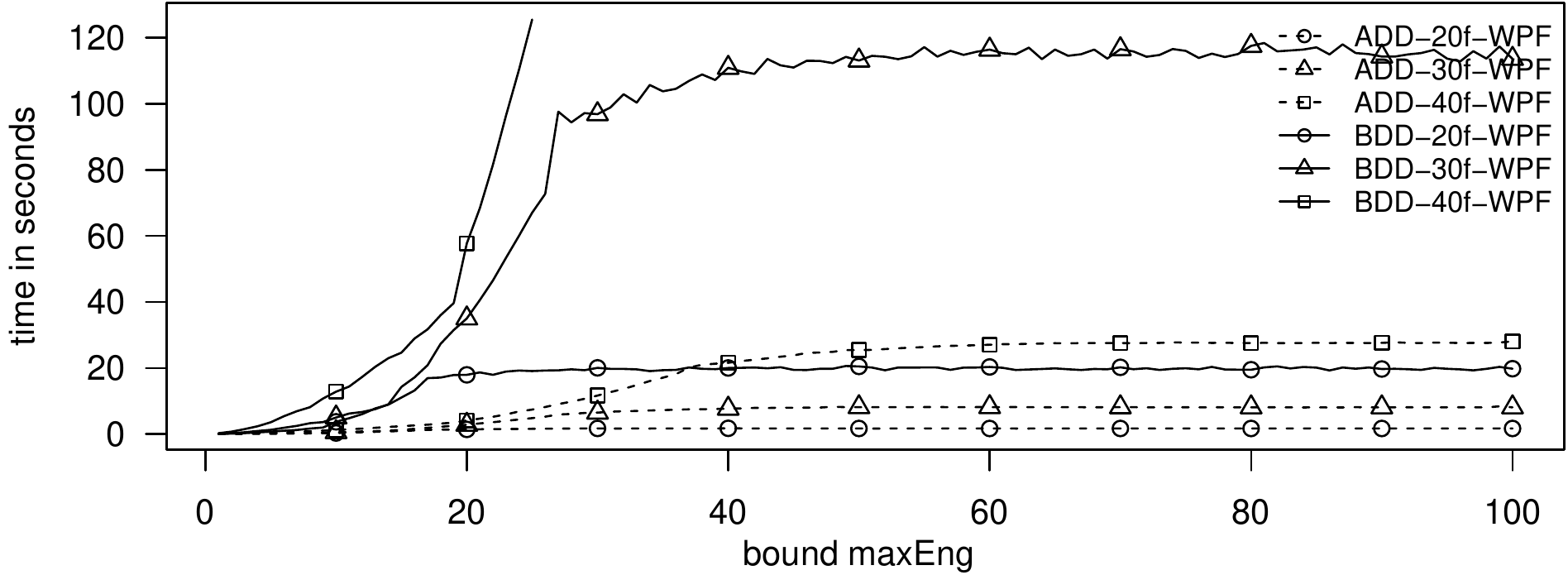}
  \vspace{-.8em}
  \caption{{Running times of ADD and BDD algorithms on
  elevator specification from List.~\ref{lst:elevator} for 20, 30, and 40
  floors. Weight definition from
  List.~\ref{lst:weightsClean} for bounds \texttt{maxEng}
  from 1 to 100 (specification with 20/30/40 floors becomes realizable
  for \texttt{maxEng} bound 36/56/76.}}
  \label{fig:scalePerFloor}
\end{figure}

\begin{figure}[t]
  \centering
  \includegraphics[width=.9\textwidth]{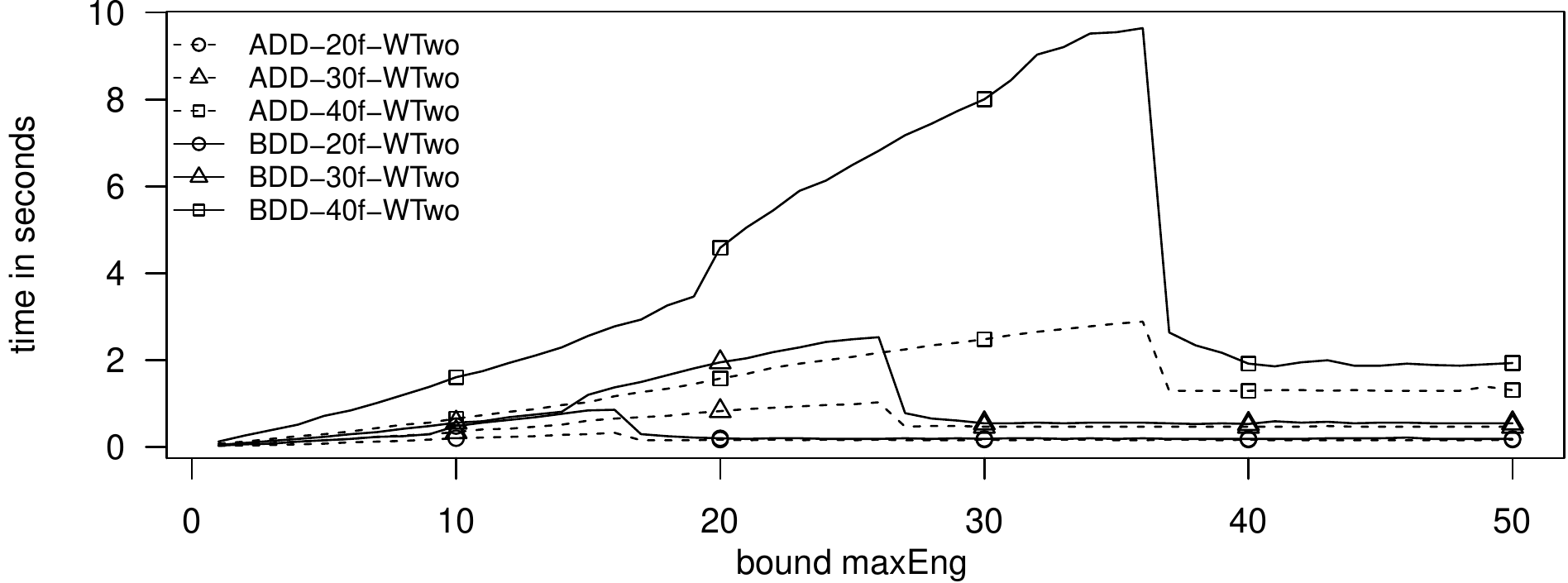}
  \vspace{-.8em}
  \caption{{Running times of ADD and BDD algorithms on
  elevator specification from List.~\ref{lst:elevator} for 20, 30, and 40
  floors. Weight definition
  from List.~\ref{lst:weightsStd} for bounds \texttt{maxEng}
  from 1 to 50 (specification with 20/30/40 floors becomes realizable
  for \texttt{maxEng} bound 19/29/39. }}
  \label{fig:scaleTwo}
\end{figure}

Fig.~\ref{fig:scalePerFloor} shows running times for the elevator
specifications with 20/30/40 floors and the weights definition with
positive weights for the distance traveled to a requested floor from
List.~\ref{lst:weightsClean}. The x-axis shows the bound \texttt{maxEng}
from weight 1 to 100 in steps of 1 (markers every 10 steps to
distinguish 20/30/40 floors). The results are based on a single run for
each specification, algorithm, and bound (600 runs) and thus may contain
some noise, e.g., see the BDD algorithm for 30 floors with
\texttt{maxEng} greater than 20. We have omitted the running times of
the BDD algorithm above 120 seconds (running times go up to 640 seconds
and remain above 600 seconds for increasing bound). The specifications
for 20/30/40 floors are realizable for bounds of at least 36/56/76.

Fig.~\ref{fig:scaleTwo} shows running times for the
elevator specifications with 20/30/40 floors and the weights definition with
three weights from
List.~\ref{lst:weightsStd}. The x-axis shows the bound
\texttt{maxEng} from weight 1 to 50 in steps of 1 (markers every 10
steps to distinguish 20/30/40 floors). The results are
based on a single run for each specification, algorithm, and bound (300
runs). The specifications for 20/30/40 floors are
realizable for bounds of at least 19/29/39. 

From both figures we observe that the ADD and BDD algorithms behave
similarly. We see that the running times become stable a few runs before
the bound for realizability is reached.
For the weight definition of List.~\ref{lst:weightsClean} with more
weights the running times shown in Fig.~\ref{fig:scalePerFloor}
become stable for both algorithms much before the synthesis problem
becomes realizable.
Interestingly, for the weight definition of
List.~\ref{lst:weightsStd} and running times shown in
Fig.~\ref{fig:scaleTwo}, running times increase up to a
factor of five before the problem becomes realizable and then drop again to become
stable.

\section{Discussion and Related Work}
\label{sec:discussion}

The evaluation in Sect.~\ref{sec:evaluation} shows that the ADD
algorithm scales well for increasing the statespace of the synthesis
problem. Both algorithms however show a strong increase in running times
for weight definitions with more weights. The increase in running times
is stronger for the BDD algorithm. 

We have evaluated the two algorithms on specifications with very
restricted game graphs. States with a pending request have a single
environment successor and at most three system choices. These very
sparse game graphs might not be suited to demonstrate the potential of
symbolic computations. Experiments covering a larger set of
specifications are required to better evaluate the potential of our
symbolic implementations. Future analysis should also take
memory consumption into account.


Energy games, as introduced by Bouyer et al.~\cite{BouyerFLMS08}, were
generalized to multi-dimensional energy games by Chatterjee et
al.~\cite{ChatterjeeDHR10}. The energy game algorithm by Brim et
al.~\cite{BrimCDGR11}, which we have extended in our work, has later
been extended by Chatterjee et al.~\cite{ChatterjeeRR14} to also solve multi-dimensional
energy games. Bohy et al.~\cite{BohyBFR13} presented an algorithm for
LTL synthesis via a $k$ bounded reduction through universal $k$-co-Büchi
automata. The algorithm is implemented in Acacia+~\cite{BohyBFJR12} and
is symbolic in the multi-dimensional energy level and bound $k$. Since
their input is LTL and weights are defined on atomic propositions and
not on transitions, our implementations are not easily comparable. A
multi-dimensional extension might be possible for the algorithms we
presented but it appears less natural for the ADD algorithm where
terminals describe minimal energy levels for a single dimension.

Finally, another quantitative game in recent interest are Mean Payoff
Games (MPG), introduced by Ehrenfeucht and Mycielski~\cite{EM79}. Bouyer
et al.~\cite{BouyerFLMS08} showed that MPG and EG are log-space equivalent.
Therefore, by applying a log-space reduction from MPG to EG, our two
proposed symbolic algorithms for EG can also be used to solve MPG.

\section{Conclusion}
\label{sec:conclusion}

We have presented two algorithms for solving reactive energy games.
Given an energy game and a maximal initial energy level both algorithms
compute the minimal required energy per state for the system to maintain
a positive energy level in an infinite play. 
The algorithms are optimized for reactive energy games in the sense that
they update energy levels of system and environment states in one
instead of two steps.
To the best of our
knowledge the algorithms are the first that are fully symbolic both in
the energy levels and in the representation of the underlying safety game. 

Our first algorithm uses BDDs while the second is implemented using
ADDs. Both make specific use of symbolic manipulations for efficiently
performing quantitative operations. We have compared the running times
of the two implementations in a preliminary evaluation and found better
scalability for the ADD algorithm for both extending the statespace and
the number of distinct weight values. Our next step will be to evaluate
the performance of both algorithms on a larger body of specifications,
to implement and evaluate symbolic strategy construction, and to deal
with the unrealizable case.

Our work shows that purely symbolic implementations for
solving energy games are possible and might make quantitative synthesis
more accessible and realistic for reactive systems engineers.

The work is part of a larger project on bridging the gap between the
theory and algorithms of reactive synthesis on the one hand and software
engineering practice on the other.  As part of this project we are
building engineer-friendly tools for reactive synthesis, see,
e.g.,~\cite{MR15patterns,MaozR15synt}.

\paragraph{Acknowledgments}
We thank Yaron Velner for helpful introduction to and discussions about
quantitative games. We thank Jean-Francois Raskin for encouraging
discussions and implementation tips at the Marktoberdorf Summer School
2015. Part of this work was done while SM was on sabbatical as visiting
scientist at MIT CSAIL. This project has received funding from the
European Research Council (ERC) under the European Union's Horizon 2020
research and innovation programme (grant agreement No 638049, SYNTECH).

\bibliographystyle{eptcs}
\bibliography{doc}

\end{document}